\setlist[enumerate]{wide=0pt, leftmargin=15pt, labelwidth=15pt, align=left}
\theoremstyle{plain} % Theorem-like structures provided by amsthm.sty
\newtheorem{theorem}{Theorem}
\newtheorem{corollary}{Corollary}
\theoremstyle{definition}
\newtheorem{definition}{Definition}
\newtheorem{remark}{Remark}
\title{Modeling asymmetry in multi-way contingency tables \\ with ordinal categories via $f$-divergence}
\author{Hisaya Okahara \thanks{Email : hisaya.okahara@gmail.com. Postal address : Yamazaki, Noda, 2788510, Chiba, Japan.} \and Kouji Tahata  \and \\
Department of Information Sciences, Tokyo University of Science}
\date{\empty}
\begin{document}

\maketitle

\begin{abstract}
This study introduces a novel model that effectively captures asymmetric structures in multivariate contingency tables with ordinal categories. 
Leveraging the principle of maximum entropy, our approach employs $f$-divergence to provide a rational model under the presence of a ``prior guess.''
Inspired by the constraints used in the derivation of multivariate normal distributions, we demonstrate that the proposed model minimizes $f$-divergence from complete symmetry under specific constraints. 
The proposed model encompasses existing asymmetry models as special cases while offering remarkably high interpretability. 
By modifying divergence measures included in $f$-divergence, the model provides the flexibility to adapt to specific probabilistic structures of interest.
Furthermore, we established theorems that show that a complete symmetry model can be decomposed into two or more models, each imposing less restrictive parameter constraints. 
We also investigated the properties of the goodness-of-fit statistics with an emphasis on the likelihood ratio and Wald test statistics.
Extensive Monte Carlo simulations confirmed the nominal size, high power, and robustness of the choice of $f$-divergence.
Finally, an application to real-world data highlights the practical utility of the proposed model for analyzing asymmetric structures in ordinal contingency tables.
\end{abstract}

%=introduction==========================================================
\section{Introduction\label{sec:intro}}
Complete symmetry is a fundamental concept in multivariate categorical data analysis.
Specifically, complete symmetry ensures that all random variables are exchangeable and that the effects of their categories are equivalent across all pairs.
This property is particularly relevant in applications such as public opinion surveys and mobility analyses, where transitions between categories (e.g., opinions and cities) are expected to exhibit symmetric behavior. 
For instance, in the mobility data, complete symmetry implies that the flow of individuals from cities A to B is equivalent to the flow in the opposite direction. 
However, in many real-world settings, such as changes in public opinion regarding government policies and shifts in regional migration patterns, this assumption is often violated.
These violations necessitate the relaxation of the \textit{complete symmetry} (S) model by adopting less restrictive alternatives that capture asymmetric structures.
Previous studies introduced asymmetric models to provide more flexible structures for categorical data \citep[see][]{bhapkar1990Marginal, kateri1997Asymmetry, kateri2021Families}.

%= models ==============================================================
Although previous studies have primarily focused on symmetric and asymmetric models, the theoretical connection between contingency tables and continuous distributions is of interest. 
\cite{pearson1900Criterion} advocated the idea of assuming that underlying continuous bivariate distributions underlie two-way contingency tables.
For ordinal two-way contingency tables, \cite{agresti1983Simple}, inspired by Pearson's assumption of an underlying continuous distribution, proposed a \textit{linear diagonals-parameter symmetry} (LS) model based on the structure of a bivariate normal distribution with equal marginal variances.
Following this, \cite{tomizawa1991Extended} extended the model to unequal marginal variances, denoting it as the \textit{extended} LS (ELS) model.
Subsequently, \cite{yamamoto2007Decomposition} proposed the \textit{generalized} LS (GLS) model, which can accommodate unequal correlation coefficients and is defined for multi-way contingency tables, thus providing greater flexibility for capturing complex dependencies in higher dimensions.

%= maximum entropy ==============================================================
In this study, we revisit these models from the perspective of the maximum entropy principle \citep{jaynes1957Information} using a generalized divergence measure, $f$-divergence \citep{ali1966General, csiszar1967Informationtype}.
We demonstrate that these models are derived by maximizing the relative entropy \citep{kullback1951Information} subject to specific constraints.
Traditional entropy maximization, based on Kullback-Leibler (KL) divergence, leads to the well-known \textit{log-linear model} \citep[p.346]{bishop2007Discretea}, which serves as a fundamental framework for statistical modeling of categorical data \citep{agresti2012Categorical}. 
Contrarily, employing $f$-divergence offers greater flexibility by allowing the selection of divergence measures better suited to the specific characteristics of the data and the objectives of the analysis.

%= maximum entropy =============================================================
To address the violation of complete symmetry, we adopted the principle of maximum entropy using the S model as a reference model (referred to as a ``prior guess'').
Modeling strategies depend on the availability of prior information regarding explicit or latent ordinal structures.
A key distinction is whether categorical variables have an inherent order.
When categorical variables lack an inherent order, models typically rely on marginal probabilities \citep{kateri1997Asymmetry, kateri2021Families}.
However, when an ordinal structure is available, incorporating information from the marginal distribution, such as means and variances, is natural to achieve a balance between model flexibility and structural constraints \citep{kateri1994Fdivergence, kateri2007Class, okahara2025Generalized}.
In this study, we focus on the latter scenario by exploring maximum-entropy models that leverage available marginal information.
This formulation yields a simple yet plausible asymmetric model that satisfies specific linear constraints, aligning with the model-generation framework proposed by \citet[p.365]{bishop2007Discretea}.

%=summary====================================================================
The remainder of the study is organized as follows. 
Section \ref{sec:preliminaries} introduces the notation used throughout the study and the divergence measures. 
Section \ref{sec:model} proposes a generalized class of $f$-divergence-based models and discusses useful special cases of the proposed model, thereby providing a unified interpretation of the parameters and highlighting the consistent role of $f$-divergence.
Section \ref{sec:properties} provides the decompositions of the symmetric structure and investigates the asymptotic properties of several test statistics, which allows us to investigate the reasons for the poor fit of complete symmetry.
Section \ref{sec:simulation} presents a simulation that demonstrates the relationship with a multivariate normal distribution.
Section \ref{sec:example} illustrates the application to real-world data.
Finally, Section \ref{sec:conclusion} summarizes the study.

%=sec: background========================================================
\section{Preliminaries}
\label{sec:preliminaries}
In this section, we introduce the notations used in this study and discuss the key divergence measures. 
Section \ref{subsec:notation} presents the fundamental notation for multivariate categorical data analysis. 
Section \ref{subsec:divergence} introduces important divergence measures included in $f$-divergence.
Additional notation is introduced in Section \ref{subsec:test}. 
See Table \ref{table:notation} in Appendix \ref{appendix-notation}.

%=subsec: notation=========================================================
\subsection{Notation} \label{subsec:notation}
Let $V = \{1, \dots, T\}$ be an index set, and let ($X_j,\ j \in V$) denote variables with $ X_j \in \mathcal{I} = \{1, \dots, r\} $, where $ r $ is the number of categories for each variable. 
We define the set of all possible category combinations as $\mathcal{I}_V = \mathcal{I} \times \cdots \times \mathcal{I} = \mathcal{I}^{|V|}$, where $|\cdot|$ denotes the number of elements in the set. 
The elements of $ \mathcal{I}_V $ are referred to as cells of the contingency table, and $r^T$ cells are present. 
A cell is denoted generically by $ \boldsymbol{i} $, where $ \boldsymbol{i} = (i_1, \dots, i_T) \in \mathcal{I}_V $.
The joint probability mass function of $(X_1, \dots, X_T)$ is denoted by $\{ \pi_{\boldsymbol{i}} \}_{\boldsymbol{i} \in \mathcal{I}_V}$ and defined as
\begin{equation}
    \pi_{\boldsymbol{i}} = \mathrm{Pr}(X_1 = i_1, \dots, X_T = i_T), \quad \boldsymbol{i} \in \mathcal{I}_{V},
\end{equation}
where $\pi_{\boldsymbol{i}} > 0$ and $\sum_{\boldsymbol{i} \in \mathcal{I}_V} \pi_{\boldsymbol{i}} = 1$.

In this study, as we aim to quantify the deviation from complete symmetry, we consider the S model as a ``prior guess.''
To explicitly describe the symmetric structures, we defined the following set:
\begin{align*}
    D(\boldsymbol{i}) &= D(i_1,\dots,i_T) \\
    &= \{ (j_1,\dots,j_T) \in \mathcal{I}_{V} \ | \ (j_1,\dots,j_T) \ \text{is any permutation of} \ (i_1,\dots,i_T) \}.
\end{align*}
Subsequently, the complete symmetric structure $\boldsymbol{\pi}^{S} = \{\pi_{\boldsymbol{i}}^{S} \}$ is denoted as
\begin{equation} \label{reference:symmetry}
    \pi_{\boldsymbol{i}}^{S} = \frac{1}{|D(\boldsymbol{i})|} \sum_{\boldsymbol{j} \in D(\boldsymbol{i})} \pi_{\boldsymbol{j}}, \quad \boldsymbol{i} \in \mathcal{I}_{V}.
\end{equation}
We define $\pi_{\boldsymbol{i}}^S$ as the probability under the complete symmetry model.

%=subsec: convention=========================================================
\subsection{Divergence Measures} \label{subsec:divergence}
%=f-divergence==============================================================
Let $\boldsymbol{p} = (p_{\boldsymbol{i}})$ and $\boldsymbol{q} = (q_{\boldsymbol{i}})$ be two discrete finite multivariate probability distributions; then, the \textit{f}-divergence measure from $\boldsymbol{q}$ to $\boldsymbol{p}$ is defined by
\begin{equation} \label{f-divergence}
    D_f (\boldsymbol{p} \parallel \boldsymbol{q}) = \sum_{\boldsymbol{i} \in \mathcal{I}_{V}} q_{\boldsymbol{i}} f \left( \frac{p_{\boldsymbol{i}}}{q_{\boldsymbol{i}}} \right),
\end{equation}
where $f$ is a real-valued convex function on $(0,\infty)$ with $f(1)=0$, $f(0) = \lim_{x \rightarrow 0} f(x)$, $0 f (0/0) = 0$ and $0 f(a/0 ) = a\lim_{t\rightarrow \infty} [f(t)/t]$ for $a > 0$.
$f$-divergence was introduced by \cite{ali1966General} and \cite{csiszar1967Informationtype}.

$f$-divergence is the only class of decomposable information monotonic divergence.
A divergence measure is decomposable if it can be expressed as the sum of componentwise contributions, whereas information monotonicity ensures the monotonic behavior of the divergence under data processing, such as coarse graining \citep[see][]{amari2009Divergence, amari2010Information}.
These properties enable the structured decomposition of complete symmetry and conditional modeling of asymmetry, which are the key advantages of our modeling framework.

It is well known that the $f$-divergence remains unchanged under linear transformations of the form $\bar{f}(x) = f(x) + c(x-1)$ with a constant $c \in \mathbb{R}$.
Moreover, because for any constant $c>0$ we have
\begin{equation}
    D_{cf}(\boldsymbol{p} \parallel \boldsymbol{q}) = c D_{f}(\boldsymbol{p} \parallel \boldsymbol{q}).
\end{equation}
Constant $c$ determines the divergence scale.
Hence, without loss of generality, we can use a convex function satisfying $f(1)=0$, $f^{\prime}(1)=0$ and $f^{\prime\prime}(1)=1$, which is called a standard $f$-function.
An $f$-divergence derived from a standard $f$-function is referred to as a standard $f$-divergence \citep{amari2016Information}. 
In the following discussion, we adopt a standard $f$-divergence and denote the first derivative of the function $f$ by $F$ (i.e., $F = f^{\prime}$).

%= power divergence ==============================================================
The $f$-divergence encompasses a broader class of divergence measures, including the power divergence family \citep{cressie1984Multinomial}, which includes the KL-divergence as a special case and provides a general framework that extends it.
This family is defined by the following settings:
\begin{equation} \label{eq-power-func}
 f_{\lambda}(x) = 
 \begin{cases}
      \frac{x(x^{\lambda}-1)}{\lambda(\lambda+1)} - \frac{x-1}{\lambda+1},
      & \text{if} \ \lambda \notin \{-1, 0\}, \\
      x - 1 - \log x,  
      & \text{if} \ \lambda = -1, \\
      1 - x + x\log x,  
      & \text{if} \ \lambda = 0,
    \end{cases}
\end{equation}
where $\lambda$ is a real-valued parameter.
\begin{equation} \label{power-divergence}
    D_{\lambda}(\boldsymbol{p} \parallel \boldsymbol{q}) 
    = \frac{1}{\lambda(\lambda + 1)} \sum_{\boldsymbol{i} \in \mathcal{I}_{V}} p_{\boldsymbol{i}} \biggl[ \biggl( \frac{p_{\boldsymbol{i}}}{q_{\boldsymbol{i}}} \biggr)^{\lambda} - 1 \biggr].
\end{equation}
For $\lambda=-1$ or $\lambda=0$, these cases can be defined by the continuous limits, and $D_{\lambda}(\boldsymbol{p} \parallel \boldsymbol{q})$ is continuous in $\lambda$.
These are expressed as follows:
\begin{equation*}
    \lim_{\lambda \rightarrow -1} D_{\lambda}(\boldsymbol{p} \parallel \boldsymbol{q}) = D_{KL}(\boldsymbol{q} \parallel \boldsymbol{p})
    \quad \text{and} \quad
    \lim_{\lambda \rightarrow 0} D_{\lambda}(\boldsymbol{p} \parallel \boldsymbol{q}) = D_{KL}(\boldsymbol{p} \parallel \boldsymbol{q}).
\end{equation*}
The power-divergence family, parameterized by $\lambda$, encompasses a wide range of well-known divergence measures as special cases. 
As demonstrated above, the limits $\lambda \rightarrow -1$ and $\lambda \rightarrow 0$ yield the reverse KL-divergence and KL-divergence, respectively.
Additionally, specific values of $\lambda$ correspond to other important divergence measures. 
For example, when $\lambda = -1/2$, the power divergence corresponds to the Hellinger distance \citep{hellinger1909Neue}, and when $\lambda = 1$, it yields Pearson's $\chi^2$-divergence \citep{pearson1900Criterion}.

%=sec: model=============================================================
\section{Modeling Framework based on \textit{f}-divergence} \label{sec:model}
Suppose that one should infer from insufficient information that specifies only a feasible set of statistical models.
The principle of maximum entropy is a powerful tool for solving such inverse problems.
In such a situation, a key idea is to use data to inform our beliefs, representing our uncertainty, which forms the basis for decision making given our ``prior guess.'' 
The maximum entropy principle offers a framework for addressing this challenge and is the only sound criterion for assigning probabilities in the absence of information.
This principle has deep roots in various fields such as statistics, information theory, and decision theory, as it provides a way to synthesize evidence while accounting for uncertainty.
For a detailed discussion, refer to \cite{jaynes1957Information, csiszar1975IDivergence}.

%=subsec: model==========================================================
\subsection{Generalized Gaussian Symmetry Model} \label{subsec:gaussian}
To describe ordinal structures, we assign numerical scores $u_1 < \cdots < u_r$ to the $r$ ordered categories of each variable $(X_j,\ j\in V)$. 
These scores $\{u_k\}$ are treated as known constants that characterize the ordinal nature of the categories, rather than as unknown parameters to be estimated.
The corresponding score vector is denoted by $\boldsymbol{u}_{\boldsymbol{i}} = (u_{i_1}, \dots, u_{i_T})^{\top}$ for $\boldsymbol{i} \in \mathcal{I}_{V}$.

In this section, we explore the general framework of modeling based on $f$-divergence minimization. 
This approach generalizes the KL-divergence minimization, which is a limiting special case of the generalized framework.
We propose the new model in Definition \ref{defn-1}, which is referred to as the \textit{$f$-divergence-based Gaussian symmetry} (GS[$f$]) model.
\begin{definition} \label{defn-1}
    Let $f$ be a twice-differentiable and strictly convex function, and let $F(x) = f^{\prime}(x)$ for all $x$. 
    Given scores $\{u_k\}$ to each ordinal category, the \textit{$f$-divergence-based Gaussian symmetry} model is defined as
    \begin{equation} \label{defn-gs-f}
       \pi_{\boldsymbol{i}} = \pi_{\boldsymbol{i}}^{S} 
            F^{-1} \left( \boldsymbol{u}_{\boldsymbol{i}}^{\top} \boldsymbol{\alpha}
            + \boldsymbol{u}_{\boldsymbol{i}}^{\top} \boldsymbol{B}\,\boldsymbol{u}_{\boldsymbol{i}}
            + \gamma_{\boldsymbol{i}} \right), \quad \boldsymbol{i}\in\mathcal{I}_{V},
    \end{equation}
    where $\boldsymbol{\alpha}=(\alpha_1,\dots,\alpha_T)^{\top}$, $\boldsymbol{B}=(\beta_{st})_{s,t\in V}$ is a symmetric matrix, and $\gamma_{\boldsymbol{i}}=\gamma_{\boldsymbol{j}}$ for $\boldsymbol{j}\in D(\boldsymbol{i})$. 
    For identifiability, we set $\alpha_T=\beta_{TT}=\beta_{(T-1)T}=0$ without loss of generality. 
\end{definition}

\begin{theorem} \label{thm-1}
    In the class of models with given scores $\{u_i\}$, given the sums of probabilities over each symmetric set $\{\sum_{\boldsymbol{j} \in D(\boldsymbol{i})} \pi_{\boldsymbol{j}} \}$, given marginal means $\{ \sum_{\boldsymbol{i} \in \mathcal{I}_{V}} u_{i_{s}} \pi_{\boldsymbol{i}} \}$ and given second-order mixed moments $\{\sum_{\boldsymbol{i} \in \mathcal{I}_{V}} u_{i_{s}} u_{i_{t}} \pi_{\boldsymbol{i}} \}$ (which are equivalent to fixing the covariances once the means are specified) between the classification variables, the model in Definition \ref{defn-1} is the unique minimizer of the $f$-divergence from the S model.
\end{theorem}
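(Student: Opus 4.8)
The plan is to recast Theorem~\ref{thm-1} as a convex optimization problem over probability vectors and to identify the model \eqref{defn-gs-f} with its unique stationary point via Lagrange multipliers, reading the multipliers off as $\boldsymbol{\alpha}$, $\boldsymbol{B}$ and $\gamma_{\boldsymbol{i}}$. First I would write the problem explicitly: minimize $D_f(\boldsymbol{\pi}\parallel\boldsymbol{\pi}^{S})=\sum_{\boldsymbol{i}}\pi_{\boldsymbol{i}}^{S}f(\pi_{\boldsymbol{i}}/\pi_{\boldsymbol{i}}^{S})$ over $\boldsymbol{\pi}$ with $\pi_{\boldsymbol{i}}>0$ subject to (i) $\sum_{\boldsymbol{j}\in D(\boldsymbol{i})}\pi_{\boldsymbol{j}}=s_{\boldsymbol{i}}$, one equation per symmetric orbit (this already forces $\sum_{\boldsymbol{i}}\pi_{\boldsymbol{i}}=1$, so no separate normalization multiplier is needed); (ii) $\sum_{\boldsymbol{i}}u_{i_{s}}\pi_{\boldsymbol{i}}=\mu_{s}$ for $s\in V$; and (iii) $\sum_{\boldsymbol{i}}u_{i_{s}}u_{i_{t}}\pi_{\boldsymbol{i}}=m_{st}$ for $s\le t$ (equivalently, the covariances once the means are fixed). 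The feasible set is the intersection of an affine subspace with the open simplex, hence convex, and it is nonempty because the underlying distribution $\boldsymbol{\pi}$ satisfies all three families of constraints. Since $f$ is strictly convex, each summand $\pi_{\boldsymbol{i}}^{S}f(\pi_{\boldsymbol{i}}/\pi_{\boldsymbol{i}}^{S})$ is strictly convex in $\pi_{\boldsymbol{i}}$, so $D_f(\cdot\parallel\boldsymbol{\pi}^{S})$ is strictly convex; a minimizer, once shown to exist, is therefore unique.

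Next I would form the Lagrangian with multipliers $\gamma_{\boldsymbol{i}}$ for~(i) — necessarily constant across each orbit $D(\boldsymbol{i})$, since there is one multiplier per orbit — together with $\alpha_{s}$ for~(ii) and $\beta_{st}$ for~(iii), and impose $\partial/\partial\pi_{\boldsymbol{i}}=0$. Because $F=f'$ is continuous and strictly increasing (strict convexity of $f$), it is invertible, and the stationarity relation $F(\pi_{\boldsymbol{i}}/\pi_{\boldsymbol{i}}^{S})=\gamma_{\boldsymbol{i}}+\boldsymbol{u}_{\boldsymbol{i}}^{\top}\boldsymbol{\alpha}+\boldsymbol{u}_{\boldsymbol{i}}^{\top}\boldsymbol{B}\,\boldsymbol{u}_{\boldsymbol{i}}$ rearranges to exactly \eqref{defn-gs-f}, where $\boldsymbol{B}=(\beta_{st})$ is symmetric after the routine factor-of-two bookkeeping between the $s\le t$ sum and the quadratic form, and $\gamma_{\boldsymbol{i}}=\gamma_{\boldsymbol{j}}$ for $\boldsymbol{j}\in D(\boldsymbol{i})$ by construction. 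Conversely, any $\boldsymbol{\pi}$ of the form \eqref{defn-gs-f} that meets (i)--(iii) satisfies these stationarity conditions; by convexity of the objective and affineness of the constraints the conditions are also sufficient for a global minimum, so the GS[$f$] probabilities are the minimizer, and uniqueness is inherited from strict convexity.

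For the identifiability clause I would note that the argument of $F^{-1}$ in \eqref{defn-gs-f} is determined only up to adding any function that is constant on each orbit $D(\boldsymbol{i})$, since such a function can be absorbed into $\gamma_{\boldsymbol{i}}$. The power sums $\boldsymbol{i}\mapsto\sum_{s}u_{i_{s}}$, $\boldsymbol{i}\mapsto\sum_{s}u_{i_{s}}^{2}$ and $\boldsymbol{i}\mapsto\bigl(\sum_{s}u_{i_{s}}\bigr)^{2}=\sum_{s}u_{i_{s}}^{2}+2\sum_{s<t}u_{i_{s}}u_{i_{t}}$ are symmetric in the coordinates and hence constant on orbits; subtracting $\alpha_{T}\sum_{s}u_{i_{s}}$, $\beta_{TT}\sum_{s}u_{i_{s}}^{2}$ and $2\beta_{(T-1)T}\sum_{s<t}u_{i_{s}}u_{i_{t}}$ in turn drives $\alpha_{T}$, $\beta_{TT}$ and $\beta_{(T-1)T}$ to zero while leaving the other parameters unchanged, which justifies the normalization ``without loss of generality.''

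The step I expect to be the main obstacle is guaranteeing that the minimizer lies in the relative interior of the simplex, so that the equality-constrained Lagrange conditions apply with no boundary multipliers and $F^{-1}$ is evaluated in its natural range: for a general strictly convex $f$ this is not automatic and may require an additional hypothesis or argument (for instance that $f$ is essentially smooth with $f'(0^{+})=-\infty$, or a coercivity/closedness argument handling the faces of the simplex). I would either invoke such a regularity property of $f$ or argue directly that boundary points cannot be optimal given that an interior feasible point exists. The remaining work — the explicit differentiation of the Lagrangian and the quadratic-form conventions — is routine.
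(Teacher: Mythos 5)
Your proposal follows essentially the same route as the paper: formulate the constrained minimization of $D_f(\boldsymbol{\pi}\parallel\boldsymbol{\pi}^{S})$, form the Lagrangian with one multiplier per symmetric orbit plus multipliers for the mean and second-moment constraints, set the partial derivatives to zero, and invert $F=f'$ using strict convexity to obtain \eqref{defn-gs-f}, identifying the multipliers with $\boldsymbol{\alpha}$, $\boldsymbol{B}$, and $\gamma_{\boldsymbol{i}}$. Your additional remarks on uniqueness via strict convexity, the identifiability normalization, and the interior-of-the-simplex issue go beyond the paper's own proof (which stops at the stationarity computation, with positivity deferred to a separate remark on estimation), but they are compatible refinements rather than a different argument.
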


\begin{proof}
The optimization problem is formulated as follows:
\begin{align}
    \text{Minimize}     &\qquad  D_{f}(\boldsymbol{\pi} \parallel \boldsymbol{\pi}^{S}) \\
    \text{subject to}   &\quad  \sum_{\boldsymbol{j} \in D(\boldsymbol{i})} \pi_{\boldsymbol{j}} = v_{\boldsymbol{i}}^{S}, \quad \boldsymbol{i} \in \mathcal{I}_{V} \\
                        &\quad  \sum_{\boldsymbol{i} \in \mathcal{I}_{V}} u_{i_{s}} \pi_{\boldsymbol{i}} = \mu_{s},
                        \quad  \sum_{\boldsymbol{i} \in \mathcal{I}_{V}} u_{i_{s}} u_{i_{t}} \pi_{\boldsymbol{i}} = \sigma_{st}, \quad s,t \in V.
\end{align}
Note that the superscript $S$ in $v_{\boldsymbol{i}}^S$ does not denote an exponent; rather, $v_{\boldsymbol{i}}^S$ represents the total probability of the symmetric set $D(\boldsymbol{i})$.
This optimization problem can be solved using Lagrange multipliers as follows:
\begin{align*}
    L(\{\pi_{\boldsymbol{i}} \}) &= D_{f}( \boldsymbol{\pi} \parallel \boldsymbol{\pi}^{S}) 
    + \sum_{s \in V} \lambda_{s} \left( \sum_{\boldsymbol{i} \in \mathcal{I}_{V}} u_{i_{s}} \pi_{\boldsymbol{i}} - \mu_{s} \right) \\
    &\qquad + \sum_{s,t \in V} \lambda_{st} \left( \sum_{\boldsymbol{i} \in \mathcal{I}_{V}} u_{i_{s}} u_{i_{t}} \pi_{\boldsymbol{i}} - \sigma_{st} \right) 
    + \sum_{\boldsymbol{i} \in \mathcal{I}_{V}} \eta_{\boldsymbol{i}} \left( \sum_{\boldsymbol{j} \in D(\boldsymbol{i})} \pi_{\boldsymbol{j}} 
    - v_{\boldsymbol{i}}^{S} \right).
\end{align*}
Equating the partial derivative of $L(\{\pi_{\boldsymbol{i}} \})$ to $0$ regarding $\pi_{\boldsymbol{i}}$ yields:
\begin{equation*}
    0 = 
    f^{\prime} \left( \frac{\pi_{\boldsymbol{i}}}{\pi_{\boldsymbol{i}}^{S}} \right) 
    + \sum_{s \in V} \lambda_{s} u_{i_{s}} 
    + \sum_{s,t \in V} \lambda_{st} u_{i_{s}} u_{i_{t}} 
    + \sum_{\boldsymbol{j} \in D(\boldsymbol{i})} \eta_{\boldsymbol{j}}.
\end{equation*}
Let $-\lambda_{s}$, $-(\lambda_{st} + \lambda_{ts})/2$ and $-\sum_{\boldsymbol{j} \in D(\boldsymbol{i})} \eta_{\boldsymbol{j}}$ denote $\alpha_{s}$, $\beta_{st}$ and $\gamma_{\boldsymbol{i}}$, respectively. 
We have
\begin{equation} \label{eq-F_pc}
    F \left( \frac{\pi_{\boldsymbol{i}} }{\pi_{\boldsymbol{i}}^{S}} \right) = 
    \sum_{s \in V} \alpha_{s} u_{i_{s}}
    + \sum_{s,t \in V} \beta_{st} u_{i_{s}} u_{i_{t}} 
    + \gamma_{\boldsymbol{i}},
\end{equation}
where $\beta_{st}=\beta_{ts}$ and $\gamma_{\boldsymbol{i}} = \gamma_{\boldsymbol{j}}$ for $\boldsymbol{j} \in D(\boldsymbol{i})$. 
When the $f$-function is strictly convex, it follows that $F'(x)=f''(x) >0$ for all $x$. 
Hence, $F$ is strictly monotonic, which ensures that $F^{-1}$ exists. 
From the above equation, we obtain
\begin{equation}
    \pi_{\boldsymbol{i}} = \pi_{\boldsymbol{i}}^{S} 
    F^{-1} \left(\boldsymbol{u}_{\boldsymbol{i}}^{\top} \boldsymbol{\alpha}
    + \boldsymbol{u}_{\boldsymbol{i}}^{\top} \boldsymbol{B} \boldsymbol{u}_{\boldsymbol{i}}
    + \gamma_{\boldsymbol{i}} \right), \quad \boldsymbol{i} \in \mathcal{I}_{V},
\end{equation}
where $\boldsymbol{\alpha}=(\alpha_{1}, \dots,\alpha_T)^{\top}$, $\boldsymbol{B}=(\beta_{st})_{s,t \in V}$ is a symmetric matrix, and $\gamma_{\boldsymbol{i}} = \gamma_{\boldsymbol{j}}$ for $\boldsymbol{j} \in D(\boldsymbol{i})$.
\qed
\end{proof}

\begin{remark}
In general, positivity of the estimated cell probabilities is not automatically guaranteed only by Definition \ref{defn-1}.
Although for some choices of $f$-function (e.g., the KL-divergence, where $F^{-1}(x)=e^x$) positivity follows directly, whereas in other cases (e.g., Pearson’s $\chi^2$ divergence, where $F^{-1}(x)=x+1$) the term $F^{-1}(\cdot)$ can yield negative values. 
Therefore, positivity must be enforced in the estimation procedure as part of the optimization problem, so that the estimated probabilities satisfy $\pi_{\boldsymbol{i}}>0$ and $\sum_{\boldsymbol{i}\in\mathcal{I}_V} \pi_{\boldsymbol{i}}=1$.
\end{remark}

For brevity, we denote the model \eqref{defn-gs-f} as the GS$[f]$ model throughout the study, where G stands for \textit{generalized} or \textit{Gaussian}, reflecting the structural resemblance to the multivariate normal distribution.
This nomenclature was motivated by Pearson’s foundational work, in which an underlying continuous distribution was assumed for latent variables.
In the continuous setting, the multivariate Gaussian distribution is known to maximize Shannon’s differential entropy under constraints on the mean and covariance matrix \citep[see][]{cover2012Elements}.
Similarly, in a discrete setting, where the distribution is supported on the lattice, these constraints lead to a lattice Gaussian distribution as the maximum-entropy distribution \citep[see][]{kemp1997Characterizationsa, nielsen2022Kullback}.
In the next section, we explore the specific interpretations of the proposed GS$[f]$ model, examine its behavior under special cases, and examine its relationship with existing models.

%=subsec: special case ==================================================
\subsection{Specific Choices of \textit{f}-functions} \label{subsec:specialcase}
Under the constraints $\pi_{\boldsymbol{i}}^{S} = \pi_{\boldsymbol{j}}^{S}$ and $\gamma_{\boldsymbol{i}} = \gamma_{\boldsymbol{j}}$ for $\boldsymbol{j} \in D(\boldsymbol{i})$, the following relationship can be derived from \eqref{defn-gs-f}.
\begin{equation} \label{eq-parameter}
    |D(\boldsymbol{i})| = \sum_{\boldsymbol{j} \in D(\boldsymbol{i})} 
    F^{-1} \left( \boldsymbol{u}_{\boldsymbol{j}}^{\top} \boldsymbol{\alpha} 
    + \boldsymbol{u}_{\boldsymbol{j}}^{\top} \boldsymbol{B} \boldsymbol{u}_{\boldsymbol{j}} 
    + \gamma_{\boldsymbol{j}} \right).
\end{equation}
The dependency structures among conditional probabilities of each cell given its symmetric set $D(\boldsymbol{i})$, given by
\begin{equation}
    \pi_{\boldsymbol{i}}^{c} = \frac{\pi_{\boldsymbol{i}}}{\smashoperator[r]{\sum\limits_{\boldsymbol{j} \in D(\boldsymbol{i})}} \pi_{\boldsymbol{j}}}, 
    \quad \boldsymbol{i} \in \mathcal{I}_{V},
\end{equation}
can be expressed using \eqref{eq-parameter} as follows:

\begin{enumerate} [label=(\roman*)] \itemsep=4mm
%- Kullback-Leibler divergence ---------------------------------------------
\item If $f(x) = x\log x - x + 1$, then the $f$-divergence reduces to the KL-divergence, $F^{-1}(x)=e^{x}$, and the GS$[f]$ model \eqref{defn-gs-f} becomes
\begin{equation}
    \pi_{\boldsymbol{i}} = \pi_{\boldsymbol{i}}^{S}
    \exp{\Bigl( \boldsymbol{u}_{\boldsymbol{i}}^{\top} \boldsymbol{\alpha} + \boldsymbol{u}_{\boldsymbol{i}}^{\top} \boldsymbol{B} \boldsymbol{u}_{\boldsymbol{i}} + \gamma_{\boldsymbol{i}} \Bigr)}, 
    \quad \boldsymbol{i} \in \mathcal{I}_{V},
\end{equation}
which, with the help of \eqref{eq-parameter}, is transformed to
\begin{equation} \label{defn-gs}
    \pi_{\boldsymbol{i}}^{c} = \frac{\theta_{\boldsymbol{i}}}{\smashoperator[r]{\sum\limits_{\boldsymbol{j} \in D(\boldsymbol{i})}} \theta_{\boldsymbol{j}}}, 
    \quad \boldsymbol{i} \in \mathcal{I}_{V},
\end{equation}
with 
\begin{equation*}
    \theta_{\boldsymbol{i}} = \exp{\Bigl( \boldsymbol{u}_{\boldsymbol{i}}^{\top} \boldsymbol{\alpha} + \boldsymbol{u}_{\boldsymbol{i}}^{\top} \boldsymbol{B} \boldsymbol{u}_{\boldsymbol{i}} \Bigr)}.
\end{equation*}
The model defined by \eqref{defn-gs} will be referred to as the \textit{Gaussian symmetry} (GS) model. 
Specifically, the GS model represents conditional probabilities with $\{\theta_{\boldsymbol{i}}\}$ in a simple ratio-based form, making the dependency structure both intuitive and transparent.
It can be easily shown that the GS model is equivalent to the GLS model defined as
\begin{equation}
    \pi_{\boldsymbol{i}} = \left( \prod_{s=1}^{T} \alpha_{s}^{i_s} \right) \left( \prod_{s=1}^{T} \beta_{s}^{i_s^2} \right) \left(\prod_{s=1}^{T-1} \prod_{t=s+1}^{T} \gamma_{st}^{i_s i_t} \right) \psi_{\boldsymbol{i}},
    \quad \boldsymbol{i} \in \mathcal{I}_{V},
\end{equation}
where $\psi_{\boldsymbol{i}} = \psi_{\boldsymbol{j}}$ for $\boldsymbol{j} \in D(\boldsymbol{i})$. 
For identifiability, we may set, for example, $\alpha_T = \beta_T = \gamma_{(T-1)T} = 1$. 
We also note that when the scores are integers given by $\{u_i=i\}$, the GS model reduces to the GLS model.

For simplicity and consistency in subsequent discussions, we shall refer to the GLS model as the GS model throughout the paper.
This nomenclature emphasizes its foundational role in describing departures from the S model while aligning with the naming convention of the proposed GS$[f]$ model.
We can easily verify that the GS model yields the ELS model \citep{tomizawa1991Extended} when all off-diagonal elements of $\boldsymbol{B}$ are equal. 
Furthermore, imposing the additional condition that all diagonal elements of $\boldsymbol{B}$ are also equal recovers the LS model \citep{agresti1983Simple}. 
These special cases highlight the flexibility of this framework in capturing various dependency structures. 

Moreover, the following relationship can be derived from equation \eqref{defn-gs}.
\begin{equation} \label{eq-gs-ratio}
    \frac{\pi_{\boldsymbol{i}}^{c}}{\pi_{\boldsymbol{j}}^{c}} = \frac{\pi_{\boldsymbol{i}}}{\pi_{\boldsymbol{j}}}
    = \frac{\theta_{\boldsymbol{i}}}{\theta_{\boldsymbol{j}}},
\end{equation}
where $\boldsymbol{j} \in D(\boldsymbol{i})$ for any $\boldsymbol{i} \in \mathcal{I}_{V}$. 
From equation \eqref{eq-gs-ratio}, the ratio of conditional or cell probabilities at symmetric positions is expressed in terms of the latent parameters $\{\theta_{\boldsymbol{i}}\}$.
Furthermore, by taking the logarithm of both sides, we can verify that the ratio of cell probabilities is represented as a linear relationship in terms of the logarithmic function.

%- Pearson's \chi^2 divergence ---------------------------------------------
\item If $f(x)=(x-1)^2/2$, then the $f$-divergence reduces to the Pearson's $\chi^2$ divergence, $F^{-1}(x)=x+1$, and the GS$[f]$ model \eqref{defn-gs-f} becomes
\begin{equation*}
    \pi_{\boldsymbol{i}} = \pi_{\boldsymbol{i}}^{S}
    \Bigl(\boldsymbol{u}_{\boldsymbol{i}}^{\top} \boldsymbol{\alpha} + \boldsymbol{u}_{\boldsymbol{i}}^{\top} \boldsymbol{B} \boldsymbol{u}_{\boldsymbol{i}} + \gamma_{\boldsymbol{i}} + 1\Bigr), 
    \quad \boldsymbol{i} \in \mathcal{I}_{V}.
\end{equation*}
Considering \eqref{eq-parameter}, the model can be verified to reduce to
\begin{equation} \label{defn-gs-pearsonian}
    \pi_{\boldsymbol{i}}^{c} = 
    \frac{1}{|D(\boldsymbol{i})|} + \theta_{\boldsymbol{i}}
    - \frac{1}{|D(\boldsymbol{i})|} \smashoperator[r]{\sum_{\boldsymbol{j} \in D(\boldsymbol{i})}} \theta_{\boldsymbol{j}}, 
    \quad \boldsymbol{i} \in \mathcal{I}_{V},
\end{equation}
with
\begin{equation*}
    \theta_{\boldsymbol{i}} = \dfrac{1}{|D(\boldsymbol{i})|} \Bigl( \boldsymbol{u}_{\boldsymbol{i}}^{\top} \boldsymbol{\alpha} + \boldsymbol{u}_{\boldsymbol{i}}^{\top} \boldsymbol{B} \boldsymbol{u}_{\boldsymbol{i}} \Bigr).
\end{equation*}
This model represents a departure from the complete symmetry in the difference-based form.
Under the same constraints as the GS$[f]$ model, this model \eqref{defn-gs-pearsonian} is the closest to the S model when the divergence is measured by Pearson's $\chi^2$ divergence.
Thus, we shall refer to this model as the \textit{Pearsonian} GS (PGS) model, where the conditional probability relationship is given as follows.
\begin{equation} \label{eq-gls-pearsonian}
    \pi_{\boldsymbol{i}}^{c} - \pi_{\boldsymbol{j}}^{c} = \theta_{\boldsymbol{i}} - \theta_{\boldsymbol{j}},
\end{equation}
where $\boldsymbol{j} \in D(\boldsymbol{i})$ for any $\boldsymbol{i} \in \mathcal{I}_{V}$.
Equation \eqref{eq-gls-pearsonian} shows that the PGS model represents the structure of the difference between two conditional probabilities, which is linearly expressed in terms of the latent parameters $\{\theta_{\boldsymbol{i}}\}$.
Contrary to the GS model, which transforms conditional probabilities on a logarithmic scale, the PGS model directly represents the differences in conditional probabilities as a linear relationship.

%- Hellinger distance ---------------------------------------------
\item If $f(x)= 2(\sqrt{x}-1)^2$, then the $f$-divergence reduces to the Hellinger distance, $F^{-1}(x)=(1-x/2)^{-2}$, and the GS$[f]$ model \eqref{defn-gs-f} becomes
\begin{equation} \label{defn-gs-hellinger}
    \pi_{\boldsymbol{i}} = \pi_{\boldsymbol{i}}^{S}
    \Biggl(- \frac{1}{2} \Bigl( \boldsymbol{u}_{\boldsymbol{i}}^{\top} \boldsymbol{\alpha}
    + \boldsymbol{u}_{\boldsymbol{i}}^{\top} \boldsymbol{B} \boldsymbol{u}_{\boldsymbol{i}}
    + \gamma_{\boldsymbol{i}} \Bigr)
    + 1 \Biggr)^{-2}, 
    \quad \boldsymbol{i} \in \mathcal{I}_{V}.
\end{equation}
Under the same constraints as the GS$[f]$ model, model \eqref{defn-gs-hellinger} is the closest to the S model when the divergence is measured by Hellinger distance.
Thus, we shall refer to this model as the \textit{Hellinger} GS (HGS) model, where the conditional probability relationship is given as follows.
\begin{equation} \label{eq-gs-hellinger}
    (\pi_{\boldsymbol{i}}^{c})^{-\frac{1}{2}} - (\pi_{\boldsymbol{j}}^{c})^{-\frac{1}{2}} = \theta_{\boldsymbol{i}} - \theta_{\boldsymbol{j}},
\end{equation}
where $\boldsymbol{j} \in D(\boldsymbol{i})$ for any $\boldsymbol{i} \in \mathcal{I}_{V}$ and
\begin{equation}
    \theta_{\boldsymbol{i}} = - \frac{1}{2} \sqrt{|D(\boldsymbol{i})|} \Bigl( \boldsymbol{u}_{\boldsymbol{i}}^{\top} \boldsymbol{\alpha} + \boldsymbol{u}_{\boldsymbol{i}}^{\top} \boldsymbol{B} \boldsymbol{u}_{\boldsymbol{i}} \Bigr).
\end{equation}

%- Power divergence ---------------------------------------------
\item If the $f$-function is chosen as $f_{\lambda}$ (as denoted in \eqref{eq-power-func}), then the $f$-divergence reduces to the Cressie-Read power divergence, and we obtain
\begin{equation}
    F_{\lambda}^{-1}(x) = 
    \begin{cases}
      (\lambda x + 1)^{\frac{1}{\lambda}},
      & \text{if} \ \lambda \neq 0, \\
      e^{x},  
      & \text{if} \ \lambda = 0,
    \end{cases}
\end{equation}
where $\lambda$ is a real-valued parameter. 
If $\lambda=0$, it corresponds to the GS model, whereas if $\lambda \neq 0$, the GS$[f]$ model \eqref{defn-gs-f} becomes
\begin{equation} \label{defn-gs-power}
    \pi_{\boldsymbol{i}} = \pi_{\boldsymbol{i}}^{S} 
    \Biggl( \lambda \Bigl( \boldsymbol{u}_{\boldsymbol{i}}^{\top} \boldsymbol{\alpha} + \boldsymbol{u}_{\boldsymbol{i}}^{\top} \boldsymbol{B} \boldsymbol{u}_{\boldsymbol{i}} + \gamma_{\boldsymbol{i}} \Bigr)
    + 1 \Biggr)^{1/\lambda}, 
    \quad \boldsymbol{i} \in \mathcal{I}_{V}.
\end{equation}
Under the same constraints as the GS$[f]$ model, model \eqref{defn-gs-power} is the closest to the S model when the divergence is measured by the Cressie-Read power divergence.
This model also reduces to the PGS model when $\lambda=1$ and to the HGS model when $\lambda = -1/2$. 

In a similar manner to the above, the following relationship can be derived.
\begin{equation} \label{eq-gls-power}
    (\pi_{\boldsymbol{i}}^{c})^{\lambda} - (\pi_{\boldsymbol{j}}^{c})^{\lambda}  = \theta_{\boldsymbol{i}} - \theta_{\boldsymbol{j}},
    \quad \lambda \neq 0,
\end{equation}
where $\boldsymbol{j} \in D(\boldsymbol{i})$ for any $\boldsymbol{i} \in \mathcal{I}_{V}$ and
\begin{equation}
    \theta_{\boldsymbol{i}} = \frac{\lambda}{|D(\boldsymbol{i})|^{\lambda}} 
    \Bigl( \boldsymbol{u}_{\boldsymbol{i}}^{\top} \boldsymbol{\alpha} + \boldsymbol{u}_{\boldsymbol{i}}^{\top} \boldsymbol{B} \boldsymbol{u}_{\boldsymbol{i}} \Bigr).
\end{equation}
\end{enumerate}

\noindent Note that the models in ($\mathrm{i}$)-($\mathrm{iv}$) require additional parameter constraints, such as $\gamma_{\boldsymbol{i}} = \gamma_{\boldsymbol{j}}$ for $\boldsymbol{j} \in D(\boldsymbol{i})$ and $\alpha_{T} = \beta_{TT} = \beta_{(T-1)T} = 0$, as noted in Definition \ref{defn-1} to ensure identifiability.

We can interpret the conditional structure of the GS$[f]$ model \eqref{defn-gs-f} analogously to the Ising model framework \citep{ising1925Beitrag}. 
Although the classical Ising model focuses on the interactions between adjacent lattice points, the GS$[f]$ model is designed to describe the interactions among cells located at symmetric positions in a contingency table. 
In this setting, the parameters $\{ \theta_{\boldsymbol{i}} \}$ serve as latent representations of the influence of each cell relative to its symmetric counterparts. 
We refer to these as the \textit{potential} parameters, reflecting their roles in capturing the relative contributions of each cell. 
Hence, the GS$[f]$ model could be viewed as an energy-based model for describing the dependency structures in categorical data.

%=subsec:interpretation=================================================
\subsection{Parameter Interpretation and Role of $f$-divergence} \label{subsec:interpretation}
In this section, we provide an interpretation of parameters $\boldsymbol{\alpha}$ and $\boldsymbol{B}$ in the proposed model and discuss the role of the divergence measure (i.e., the choice of the $f$-function) in relating the model to a log-linear framework.
For simplicity, we first consider a saturated log-linear model for a three‐way contingency table, where $\boldsymbol{i}=(i_1,i_2,i_3)$ and $V=\{1,2,3\}$:
\begin{equation} \label{eq-loglinear}
    \pi_{\boldsymbol{i}} = \dfrac{\Lambda_{\boldsymbol{i}}}{\sum\limits_{\boldsymbol{i} \in \mathcal{I}_{V}} \Lambda_{\boldsymbol{i}}},
    \quad \boldsymbol{i} \in \mathcal{I}_{V},
\end{equation}
with
\begin{equation} \label{eq-loglinear-param}
    \Lambda_{\boldsymbol{i}} 
    = \exp \Biggl( 
    \underbrace{\sum_{s \in V} \lambda_{s}(i_s)}_{\text{Main effects}}
    + \underbrace{\sum_{s<t} \lambda_{st}(i_s, i_t) + \lambda_{123} (\boldsymbol{i})}_{\text{Interactions}}
    \Biggr).
\end{equation}
Under typical identifiability constraints, for each $s \in V$ we impose
\begin{gather*}
    \sum_{i_{s} \in \mathcal{I}} \lambda_{s} (i_{s}) = 0, \quad 
    \sum_{i_{s} \in \mathcal{I}} \lambda_{123} (\boldsymbol{i}) = 0, \quad s \in V, \\
    \sum_{i_{s} \in \mathcal{I}} \lambda_{st} (i_{s}, i_{t}) 
    = \sum_{i_{t} \in \mathcal{I}} \lambda_{st} (i_{s}, i_{t}) = 0, 
    \quad s,t \in V, \ s<t.
\end{gather*}
Here, $\lambda_{s}(i_s)$ represents the main effects and terms $\lambda_{st}(i_s,i_t)$ and $\lambda_{123}(\boldsymbol{i})$ capture higher‐order interactions \citep[see][]{agresti2012Categorical}.

In the formulation of GS model \eqref{defn-gs}, score vectors $\boldsymbol{u}_{\boldsymbol{i}}$ are fixed with normalization conditions, such that the free parameters are $\boldsymbol{\alpha}$ and $\boldsymbol{B}$. 
Note that any intercept term cancels out in the ratio defining $\pi_{\boldsymbol{i}}^{c}$, which is analogous to the cancellation of $\lambda$ in \eqref{eq-loglinear-param}.
By comparing this with the log-linear parameterization in \eqref{eq-loglinear-param}, we can interpret the two components as follows:
Term $\boldsymbol{u}_{\boldsymbol{i}}^{\top}\boldsymbol{\alpha}$ provides a linear approximation of the main effects, whereas the quadratic term $\boldsymbol{u}_{\boldsymbol{i}}^{\top}\boldsymbol{B}\,\boldsymbol{u}_{\boldsymbol{i}}$ can be separated into two components. 
Its diagonal part further refines the approximation of the main effects by capturing nonlinear contributions, whereas its off-diagonal part approximates the first-order interactions corresponding to the interaction terms.
It is important to emphasize that the GS model is constructed within a log-linear framework but conditioned on a symmetric structure. 
That is, rather than modeling the full joint probability, the GS model characterizes the conditional probabilities within each symmetric cell set $D(\boldsymbol{i})$. 

Furthermore, when an arbitrary $f$-function is employed, the proposed model can be written in a more general form as follows:
\begin{equation} \label{eq-gls-f-param}
    F\Bigl(|D(\boldsymbol{i})|\pi_{\boldsymbol{i}}^{c}\Bigr) 
    = \boldsymbol{u}_{\boldsymbol{i}}^{\top}\boldsymbol{\alpha} 
    + \boldsymbol{u}_{\boldsymbol{i}}^{\top}\boldsymbol{B}\,\boldsymbol{u}_{\boldsymbol{i}}
    + \gamma_{\boldsymbol{i}}, \quad \boldsymbol{i}\in \mathcal{I}_{V},
\end{equation}
where the link function $F$ is determined by the chosen $f$-divergence. 
Therefore, the $f$-function is analogous to a link function in generalized linear models, allowing the framework to extend beyond the conventional log-linear approach.
Notably, the interpretation of the parameters remains consistent with the case of the KL-divergence. 
Specifically, the term $\boldsymbol{u}_{\boldsymbol{i}}^{\top}\boldsymbol{\alpha}$ continues to approximate the main effects linearly, whereas the quadratic term $\boldsymbol{u}_{\boldsymbol{i}}^{\top}\boldsymbol{B} \boldsymbol{u}_{\boldsymbol{i}}$ captures nonlinear effects and first-order interactions. 
This consistency arises because the underlying structural representation of the conditional probabilities is preserved regardless of the choice of the $f$-function. 
Therefore, the proposed model maintains a unified interpretation framework across different $f$-divergences.

%=sec: complete symmetry ========================================================
\section{Overview of Complete Symmetry} 
\label{sec:properties}
In this section, we explore the relationship between the S model and the proposed GS$[f]$ model, focusing on their structural and statistical properties.
Section \ref{subsec:decomposition} provides the necessary and sufficient conditions for the S model and elucidates its decomposition into less-restrictive component models. 
Section \ref{subsec:test} discusses the properties of the test statistics associated with the S model, focusing on the partitioning of goodness-of-fit statistics and their implications for model assessment. 

For the empirical analysis in Section \ref{sec:example}, we define the ELS$[f]$ model (where all off-diagonal elements of $\boldsymbol{B}$ are equal) and LS$[f]$ model (where all off-diagonal and diagonal elements of $\boldsymbol{B}$ are equal) as extensions of the ELS and LS models \citep{tomizawa1991Extended, agresti1983Simple}, respectively.
Although similar properties can be demonstrated for these nested models, we omit these discussions for brevity.

%= sec: decomposition ========================================================
\subsection{Component Models in Complete Symmetry}
\label{subsec:decomposition}
In this section, we introduce symmetric models that complement the asymmetric GS$[f]$ model and provide a framework for reconstructing complete symmetry.
This reconstruction is crucial for partitioning the test statistics discussed in Section \ref{subsec:test} and enhancing our understanding of the structure of multivariate categorical data. 
To achieve this, we first introduced fundamental symmetric models.

First, because the proposed GS$[f]$ model allows flexibility in the marginal means and the covariance matrix regarding the score transformation ($g(i)=u_i$), we introduce the \textit{second-moment equality} (ME$_2$) model, which imposes symmetric structures on these quantities.  
The ME$_2$ model is defined as follows:
\begin{equation}
    \mu_1 = \mu_s, \ \sigma_{1}^{2} = \sigma_{s}^{2} \quad \text{and} \quad \rho_{12} = \rho_{st}, \quad s,t \in V,
\end{equation}
where
\begin{equation}
   \mu_{u}   = \mathrm{E}\!\bigl[g(X_{u})\bigr],\quad
   \sigma_{u}^{2} = \mathrm{Var}\!\bigl[g(X_{u})\bigr],\quad
   \rho_{st} = \frac{\mathrm{E}\!\bigl[g(X_{s})\,g(X_{t})\bigr] - \mu_{s}\mu_{t}}{\sigma_{s}\sigma_{t}}.
\end{equation}
For $T \ge 3$, the following results describe the construction of the S model.

\begin{theorem} \label{thm-2}
    The S model holds if and only if both the GS$[f]$ and ME$_{2}$ models hold.
\end{theorem}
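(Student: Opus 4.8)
The plan is to prove the two directions separately; essentially all the work is in the ``if'' direction, which I intend to obtain as a short corollary of Theorem~\ref{thm-1}. \textbf{Necessity} (S $\Rightarrow$ GS$[f]$ and ME$_{2}$) should be immediate: if the S model holds then $\pi_{\boldsymbol{i}}=\pi_{\boldsymbol{i}}^{S}$ for every $\boldsymbol{i}\in\mathcal{I}_{V}$, and this is exactly the representation \eqref{defn-gs-f} with $\boldsymbol{\alpha}$, $\boldsymbol{B}$ and all $\gamma_{\boldsymbol{i}}$ equal to zero, because for a standard $f$-function $F(1)=f'(1)=0$ and $F$ is strictly increasing, so $F^{-1}(0)=1$. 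For ME$_{2}$, complete symmetry is precisely permutation invariance of the joint law of $(X_1,\dots,X_T)$; hence all univariate marginals coincide (so $\mu_s$ and $\sigma_s^2$ are independent of $s$) and all bivariate marginals coincide (so $\rho_{st}$ is independent of the pair $\{s,t\}$).

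For \textbf{sufficiency} (GS$[f]$ and ME$_{2}$ $\Rightarrow$ S), assume $\boldsymbol{\pi}$ follows the GS$[f]$ model and satisfies ME$_{2}$, and evaluate the functionals appearing in Theorem~\ref{thm-1} at $\boldsymbol{\pi}$ itself: $v_{\boldsymbol{i}}^{S}=\sum_{\boldsymbol{j}\in D(\boldsymbol{i})}\pi_{\boldsymbol{j}}$, $\mu_s=\sum_{\boldsymbol{i}}u_{i_s}\pi_{\boldsymbol{i}}$ and $\sigma_{st}=\sum_{\boldsymbol{i}}u_{i_s}u_{i_t}\pi_{\boldsymbol{i}}$. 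By Theorem~\ref{thm-1}, $\boldsymbol{\pi}$ is the \emph{unique} minimizer of $D_f(\cdot \parallel \boldsymbol{\pi}^{S})$ among all distributions matching $\{v_{\boldsymbol{i}}^{S}\}$, $\{\mu_s\}$ and $\{\sigma_{st}\}$. The idea is to show that $\boldsymbol{\pi}^{S}$ is itself feasible for this problem; since $D_f(\boldsymbol{\pi}^{S}\parallel\boldsymbol{\pi}^{S})=0$ is the global minimum of any $f$-divergence, $\boldsymbol{\pi}^{S}$ would then be a minimizer over the feasible set, and uniqueness forces $\boldsymbol{\pi}=\boldsymbol{\pi}^{S}$, that is, the S model. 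Feasibility has three parts. The orbit-sum constraints hold automatically from \eqref{reference:symmetry}, since symmetrization preserves the total mass of each $D(\boldsymbol{i})$. For the moments, exchangeability of $\boldsymbol{\pi}^{S}$ together with the orbit-sum identity applied to the symmetric functions $\sum_{s}u_{i_s}$, $\sum_{s}u_{i_s}^{2}$ and $\sum_{s\neq t}u_{i_s}u_{i_t}$ gives $\sum_{\boldsymbol{i}}u_{i_s}\pi_{\boldsymbol{i}}^{S}=\frac{1}{T}\sum_{s'\in V}\mu_{s'}$ for every $s$, while $\sum_{\boldsymbol{i}}u_{i_s}u_{i_t}\pi_{\boldsymbol{i}}^{S}$ equals the average of $\{\sigma_{s's'}\}_{s'\in V}$ when $s=t$ and the average of $\{\sigma_{s't'}\}_{s'\neq t'}$ when $s\neq t$. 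Under ME$_{2}$ each of these index families is constant — for the off-diagonal moments one uses $\sigma_{s't'}=\rho_{s't'}\sigma_{s'}\sigma_{t'}+\mu_{s'}\mu_{t'}$ with all three factors constant — so the averages reproduce $\mu_s$ and $\sigma_{st}$ exactly, and $\boldsymbol{\pi}^{S}$ is feasible.

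The one step that needs genuine care is this moment computation for $\boldsymbol{\pi}^{S}$: one must verify carefully that symmetrization replaces each coordinate's mean, and each ordered pair's second mixed moment, by the coordinate-average, and then that ME$_{2}$ collapses those averages to the common value, keeping the diagonal second moments $\mathrm{E}[g(X_s)^2]$ separate from the off-diagonal ones $\mathrm{E}[g(X_s)g(X_t)]$ and translating between ``covariances/correlations'' and ``second mixed moments.'' The regime $T\ge 3$ in the statement is where the equal-correlation clause $\rho_{12}=\rho_{st}$ carries information beyond equality of means and variances. As remarked before Theorem~\ref{thm-2}, substituting the correspondingly stronger moment-equality model gives the analogous decomposition with ELS$[f]$ or LS$[f]$ in place of GS$[f]$ by the same argument.
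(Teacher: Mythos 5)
Your proposal is correct, but your sufficiency argument takes a genuinely different route from the paper's. The paper argues directly from the model equations: for $\boldsymbol{j}\in D(\boldsymbol{i})$ it forms the difference $F(\hat\pi_{\boldsymbol{i}}/\hat\pi_{\boldsymbol{i}}^{S})-F(\hat\pi_{\boldsymbol{j}}/\hat\pi_{\boldsymbol{j}}^{S})=(\boldsymbol{u}_{\boldsymbol{i}}-\boldsymbol{u}_{\boldsymbol{j}})^{\top}\boldsymbol{\alpha}+(\boldsymbol{u}_{\boldsymbol{i}}-\boldsymbol{u}_{\boldsymbol{j}})^{\top}\boldsymbol{B}(\boldsymbol{u}_{\boldsymbol{i}}+\boldsymbol{u}_{\boldsymbol{j}})$, multiplies by $\hat\pi_{\boldsymbol{i}}$ and sums so that ME$_{2}$ annihilates the right-hand side, and then uses strict monotonicity of $F$ (each summand $(\hat\pi_{\boldsymbol{i}}-\hat\pi_{\boldsymbol{j}})\bigl(F(\hat\pi_{\boldsymbol{i}}/\hat\pi_{\boldsymbol{i}}^{S})-F(\hat\pi_{\boldsymbol{j}}/\hat\pi_{\boldsymbol{j}}^{S})\bigr)$ is nonnegative) together with an induction over transpositions to force $\hat\pi_{\boldsymbol{i}}=\hat\pi_{\boldsymbol{j}}$ on every orbit. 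You argue variationally instead: fix the orbit sums and moments of $\boldsymbol{\pi}$, identify $\boldsymbol{\pi}$ as the constrained minimizer via Theorem~\ref{thm-1}, verify through the symmetrization/moment computation that ME$_{2}$ is exactly what makes $\boldsymbol{\pi}^{S}$ feasible, and conclude from $D_f(\boldsymbol{\pi}^{S}\parallel\boldsymbol{\pi}^{S})=0$; in fact minimality already gives $D_f(\boldsymbol{\pi}\parallel\boldsymbol{\pi}^{S})\le 0$, hence $\boldsymbol{\pi}=\boldsymbol{\pi}^{S}$, so you do not even need uniqueness. Your feasibility computation is correct and makes the role of the ME$_{2}$ constraints more transparent than the paper's algebra does. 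The one point to shore up is that you invoke Theorem~\ref{thm-1} in its converse reading — that any distribution of the GS$[f]$ form satisfying the constraints \emph{is} the minimizer — whereas the paper's proof of that theorem only derives the functional form from the Lagrangian stationarity conditions; one added line observing that $D_f(\cdot\parallel\boldsymbol{\pi}^{S})$ is strictly convex and the constraints are affine (so stationarity plus feasibility implies global optimality) closes this. In exchange, your proof is shorter and reuses the maximum-entropy characterization, while the paper's proof is self-contained, needs no appeal to optimality, and its monotonicity-plus-induction device is precisely the ingredient one would otherwise need to justify that converse reading.
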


The proof of this theorem is provided in Appendix \ref{appendix-proofs}.
This theorem plays a crucial role in the analysis of test statistics in Section \ref{subsec:test}.
In addition to the ME$_{2}$ model, introducing marginal models that focus on specific aspects of complete symmetry is beneficial.
First, consider the \textit{marginal mean equality} (ME) model, defined as 
\begin{equation*}
    \mu_{1} = \mu_{u}, \quad u \in V.
\end{equation*}
Second, consider the \textit{marginal variance equality} (VE) model, defined as
\begin{equation*}
    \sigma_{1}^{2} = \sigma_{u}^{2}, \quad u \in V.
\end{equation*}
Finally, we consider the \textit{marginal correlation equality} (CE) model, defined as
\begin{equation*}
    \rho_{12} = \rho_{st}, \quad s,t \in V.
\end{equation*}
These models address different aspects of complete symmetry. 
Together with the GS$[f]$ model, they form a comprehensive framework for decomposing the S model.
The following result highlights the decomposition of the S model into the following groups of models:

\begin{corollary} \label{cor-2}
    The S model holds if and only if all GS$[f]$, ME, VE, and CE models hold.
\end{corollary}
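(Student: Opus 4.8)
The plan is to derive this corollary directly from Theorem~\ref{thm-2} by observing that the ME$_2$ model is nothing but the simultaneous imposition of the ME, VE, and CE conditions. Theorem~\ref{thm-2} already supplies all of the analytic content — it characterizes the S model as the conjunction of GS$[f]$ and ME$_2$ — so no new optimization or Lagrangian argument is required; the work is purely one of re-expressing the ME$_2$ constraints.

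First I would recall the statement of Theorem~\ref{thm-2}: under $T\ge 3$, the S model holds if and only if both the GS$[f]$ and ME$_2$ models hold. Next I would unpack the definition of the ME$_2$ model: a distribution $\boldsymbol{\pi}$ lies in the ME$_2$ class exactly when (a) $\mu_1=\mu_u$ for all $u\in V$, (b) $\sigma_1^2=\sigma_u^2$ for all $u\in V$, and (c) $\rho_{12}=\rho_{st}$ for all $s,t\in V$. Condition (a) is precisely the defining condition of the ME model, condition (b) that of the VE model, and condition (c) that of the CE model. Hence, as classes of distributions, the ME$_2$ model coincides with the intersection of the ME, VE, and CE models. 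Substituting this identity into Theorem~\ref{thm-2} gives: the S model holds iff GS$[f]$ and ME$_2$ hold iff GS$[f]$, ME, VE, and CE all hold, in both directions, which is the assertion of Corollary~\ref{cor-2}.

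The only step that warrants a moment's care — and the closest thing here to an obstacle — is the bookkeeping in the equivalence ``ME$_2\ \Longleftrightarrow\ $ME $\wedge$ VE $\wedge$ CE''. One should check that nothing is silently dropped: in particular, that allowing $s=t$ in the CE condition contributes only the trivial identity $\rho_{ss}=1$, and that the covariance/second-mixed-moment formulation used in Theorem~\ref{thm-2} and its proof is genuinely recovered from the correlation equality of CE \emph{together with} the variance equality of VE (and the mean equality of ME, since $\rho_{st}$ is built from $\mu_s,\mu_t,\sigma_s,\sigma_t$). Once this is confirmed, the corollary follows immediately from Theorem~\ref{thm-2} with no separate decomposition argument, and I would also note that the same $T\ge 3$ hypothesis is inherited from Theorem~\ref{thm-2}.
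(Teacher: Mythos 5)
Your proof is correct and matches the paper's route: the paper treats Corollary~\ref{cor-2} as an immediate consequence of Theorem~\ref{thm-2}, noting (in the appendix proof of that theorem) that the ME$_2$ model is by definition the simultaneous imposition of the ME, VE, and CE conditions, which is exactly your argument. Your extra bookkeeping about recovering the second-moment/covariance constraints from ME, VE, and CE jointly is the same observation the paper uses inside the proof of Theorem~\ref{thm-2}, so nothing is missing.
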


This corollary is particularly useful for identifying the sources of the lack-of-fit in the S model. 
These results hold regardless of the specific combination of models used as long as the conditions of the ME$_{2}$ model are satisfied.

\begin{remark}
The validity of this decomposition can be explained as follows.
The proposed model was constructed to minimize $f$-divergence under the given marginal means and covariance structures.
By parameterizing these structures, the model adopts a framework similar to that of a multivariate normal distribution characterized by mean parameters and a covariance matrix.
Restoring symmetry in a complementary manner enables the reconstruction of complete symmetry.
\end{remark}

%= subsec: test ==============================================================
\subsection{Properties of Test Statistics under Complete Symmetry}
\label{subsec:test}

%= subsubsec: goodness-of-fit ==========-----------===========================
\subsubsection{Goodness-of-fit Statistics}
\label{subsubsec:goodness-of-fit}
Let $n_{\boldsymbol{i}}$ denote the observed frequency in the $\boldsymbol{i}$th cell for $\boldsymbol{i} \in \mathcal{I}_{V}$ in the $r^T$ contingency table.
We assume that the cell counts follow a multinomial distribution.
Let $m_{\boldsymbol{i}}$ and $\hat m_{\boldsymbol{i}}$ denote the expected frequency in the $\boldsymbol{i}$th cell and its maximum likelihood estimate (MLE) under a given model. 
The MLEs of the expected frequencies can be obtained using the Newton-Raphson method to solve the log-likelihood equations.
For instance, we can use packages \texttt{Rsolnp} and \texttt{mph.fit} \citep{lang2005Homogeneous} in \textsf{R} to calculate the MLEs under any model, as demonstrated in Section \ref{sec:example}.

Various test statistics can be used to assess the goodness-of-fit of a model, such as the likelihood ratio chi-square statistic, power-divergence statistics \citep{cressie1984Multinomial}, and $\phi$-divergence test statistics \citep{pardo2010Minimum, pardo2010Approach}.
The $\phi$-divergence test statistics provide a unified framework for assessing goodness-of-fit by generalizing several divergence measures, including the likelihood ratio and Pearson's chi-square statistics, as special cases.
For example, the likelihood ratio chi-square statistic is defined as 
\begin{equation}
    G^{2}(\text{M}) = 2 \sum_{\boldsymbol{i} \in \mathcal{I}_{V}} n_{\boldsymbol{i}} \log \left( \frac{n_{\boldsymbol{i}}}{\hat m_{\boldsymbol{i}}} \right),
\end{equation}
where, under model $\text{M}$, $G^{2}(\text{M})$ asymptotically follows a $\chi^{2}_{\textit{df}}$ distribution, where $\textit{df}$ denotes the corresponding degrees of freedom.
Table \ref{table:data_type} presents the \textit{df} values for the various models in the context of the $r^T$ contingency table.
The properties of the test statistics for several models have been extensively studied \citep{darroch1963Testing, tomizawa2007Analysis}.
If a model $\mathrm{M}_3$ can be decomposed into two models $\mathrm{M}_1$ and $\mathrm{M}_2$, and the likelihood ratio chi-square statistic satisfies the additive relation
\begin{equation} \label{eq-g2}
    G^2(\textup{M}_3) = G^2(\textup{M}_1) + G^2(\textup{M}_2) + o_p(1),
\end{equation}
where \textit{df} for M$_3$ is equal to the sum of the \textit{df} for M$_1$ and M$_2$, then, when both $\mathrm{M}_1$ and $\mathrm{M}_2$ are accepted with high probability, $\mathrm{M}_3$ would also be accepted \citep[see][]{aitchison1962Largesample}.

\begin{table}[ht]
  \caption{Degrees of freedom for various models in the $r^T$ contingency table}
  \label{table:data_type}
  \centering
  \begin{tabular}{ll}
    \toprule
    Models & \textit{df} \\
    \midrule
    S        & $r^T - L^{(r,T)}$ \\
    GS$[f]$  & $r^T - L^{(r,T)} - (T^2+3T-6)/2$ \\
    ELS$[f]$ & $r^T - L^{(r,T)} - 2T + 2$ \\
    LS$[f]$  & $r^T - L^{(r,T)} - T + 1$ \\
    ME$_{2}$ & $(T^2+3T-6)/2$ \\
    ME       & $T-1$ \\
    VE       & $T-1$ \\
    CE       & $(T^2-T-2)/2$ \\
    \bottomrule \addlinespace[3pt]
    \multicolumn{2}{c}{\textit{Note:} $L^{(r,T)}=\displaystyle{\binom{r+T-1}{T} = \frac{(r+T-1)!}{(r-1)!T!}}$}
  \end{tabular}
\end{table}

%= subsubsec: preparations ===================================================
\subsubsection{Preparations}
\label{subsubsec:preparations}
%- lexicographical Ordering -------------------------------------
To facilitate the discussion of the properties of goodness-of-fit statistics, we introduce some additional notation. 
Let $\mathcal{I}_{V}^{S}$ be a subset of $\mathcal{I}_{V}$ consisting of indices in non-decreasing order, which correspond to parameters $\{\gamma_{\boldsymbol{i}}\}$:
\begin{equation}
    \mathcal{I}_{V}^{S} = \{(i_1,\dots, i_T) \ | \ i_1 \le \cdots \le i_T\} \subset \mathcal{I}_{V}.
\end{equation}
For a set $V=\{1,\dots,T\}$, we define a set $\mathcal{V}_{2}$ of ordered pairs corresponding to the off-diagonal parameters of a symmetric matrix $\boldsymbol{B}$ as
\begin{equation}
    \mathcal{V}_{2} = \{(v_1, v_2) \ | \ v_1 < v_2 \ \land \ v_1 < T-1 \} \subset V \times V.
\end{equation}

We also introduce the concept of \textit{lexicographical ordering} ($\preceq$), defined as
\begin{equation*}
    \forall \boldsymbol{v},\boldsymbol{v}' \in \mathbb{R}^{n}, \ \boldsymbol{v} \preceq \boldsymbol{v}' \ \Leftrightarrow \ \exists i \in \{1,\dots,n\}, \ (\forall j<i, v_j = v_{j}') \land (v_{i} \le v_{i}').
\end{equation*}
This ordering allows us to define $(\mathcal{I}_{V}, \preceq)$, $(\mathcal{I}_{V}^{S}, \preceq)$ and $(\mathcal{V}_{2}, \preceq)$ as totally ordered sets.
For an arbitrary totally ordered set $\mathcal{A}$ with lexicographical ordering $\preceq$, the elements can be enumerated as a sequence $(\boldsymbol{a}_{n})_{n \in \mathbb{N}}$ where $\boldsymbol{a}_{1} \preceq \dots \preceq \boldsymbol{a}_{|\mathcal{A}|}$.
Similarly, we construct sequences of elements in ascending order for each defined set as follows:
\begin{itemize} \itemsep=2mm
    \item For $\mathcal{I}_{V}$: \ $(\boldsymbol{\iota}_{1}, \dots, \boldsymbol{\iota}_{|\mathcal{I}_{V}|})$
    where $\boldsymbol{\iota}_{1} \preceq \cdots \preceq \boldsymbol{\iota}_{|\mathcal{I}_{V}|}$

    \item For $\mathcal{I}_{V}^{S}$: \ $(\boldsymbol{\sigma}_{1}, \dots, \boldsymbol{\sigma}_{|\mathcal{I}_{V}^{S}|})$ 
    where $\boldsymbol{\sigma}_{1} \preceq \cdots \preceq \boldsymbol{\sigma}_{|\mathcal{I}_{V}^{S}|}$

    \item For $\mathcal{V}_{2}$: \ $(\boldsymbol{\nu}_{1}, \dots, \boldsymbol{\nu}_{|\mathcal{V}_{2}|})$
    where $\boldsymbol{\nu}_{1} \preceq \cdots \preceq \boldsymbol{\nu}_{|\mathcal{V}_{2}|}$
\end{itemize}

%- Difference scores --------------------------------------------
After lexicographically sorting cells $\boldsymbol{\iota}_1 \preceq \dots \preceq \boldsymbol{\iota}_{|\mathcal{I}_{V}|}$, we define the $h$th score vector for a table with $r$ ordinal categories as
\begin{equation}
    \boldsymbol{s}_{h} = \boldsymbol{1}_{r^{h-1}} \otimes (u_{1}, \dots, u_{r})^\top \otimes \boldsymbol{1}_{r^{T-h}}, 
    \quad h \in V,
\end{equation}
where $\boldsymbol{1}_{s}$ is the $s \times 1$ vector of ones and ``$\otimes$'' denotes the Kronecker product \citep{magnus2019Matrix}.
Each resulting vector $\boldsymbol{s}_{h} \in \mathbb{R}^{|\mathcal{I}_{V}|}$ assigns appropriate score values to the $h$th variable $X_h$ across all possible cell combinations.

Based on these score vectors, we introduce three types of difference score vectors essential for our analysis.
The first- and second-order differences are defined for each $h \in \{1, \dots, T-1\}$ as:
\begin{equation}
  \boldsymbol{\Delta}_{1}^{(h)} =
  \boldsymbol s_{h} - \boldsymbol s_{h+1}, 
  \quad
  \boldsymbol{\Delta}_{2}^{(h)} =
  (\boldsymbol{s}_{h} \odot \boldsymbol{s}_{h})
  - (\boldsymbol{s}_{h+1} \odot \boldsymbol{s}_{h+1}),
\end{equation}
where ``$\odot$'' denotes the Hadamard product.
Next, for each pair $\boldsymbol{\nu}_{k} = (s, t) \in \mathcal{V}_2$, we define the second-order interaction differences as
\begin{equation}
  \boldsymbol{\Delta}_{\mathcal{V}_2}^{(k)}
  = (\boldsymbol{s}_{s} \odot \boldsymbol{s}_{t})
  - (\boldsymbol{s}_{s^{+}} \odot \boldsymbol{s}_{t^{+}}),
  \quad k = 1, \dots, |\mathcal{V}_2|,
\end{equation}
where $\boldsymbol{\nu}_{k+1} = (s^{+}, t^{+})$ represents the next pair in lexicographic order of $\mathcal{V}_2$. 
Note that we introduce $\boldsymbol{\nu}_{|\mathcal{V}_{2}|} = (T-1, T)$ for convenience, although this pair is not included in $\mathcal{V}_{2}$.

%= subsubsec: partition =====---==============================================
\subsubsection{Partition of Test Statistics}
\label{subsubsec:partition}
Let
\begin{align*}
    \boldsymbol{\pi} &= \left(\pi_{\boldsymbol{\iota}_1}, \dots, \pi_{\boldsymbol{\iota}_{|\mathcal{I}_{V}|}} \right)^\top, \\
    \boldsymbol{\pi}^{S} &= \left(\pi_{\boldsymbol{\iota}_1}^{S}, \dots, \pi_{\boldsymbol{\iota}_{|\mathcal{I}_{V}|}}^{S} \right)^\top, \\
    F \biggl( \frac{\boldsymbol{\pi}}{\boldsymbol{\pi}^S} \biggr) &= \left( F \Biggl( \frac{\pi_{\boldsymbol{\iota}_{1}}}{\pi_{\boldsymbol{\iota}_{1}}^{S}} \Biggr),\dots, F \Biggl( \frac{\pi_{\boldsymbol{\iota}_{|\mathcal{I}_{V}|}}}{\pi_{\boldsymbol{\iota}_{|\mathcal{I}_{V}|}}^{S}} \Biggr) \right)^\top, \\
    \boldsymbol{\vartheta} &= (\boldsymbol{\alpha}^{\prime}, \boldsymbol{\beta}_{\text{diag}}^{\prime}, \boldsymbol{\beta}_{\text{off-diag}}^{\prime}, \boldsymbol{\gamma}^{\prime})^\top,
\end{align*}
where the parameter vectors are defined as:
\begin{align*}
    \boldsymbol{\alpha}^{\prime} &= (\alpha_{1}^{\prime},\dots, \alpha_{T-1}^{\prime}), \\ 
    \boldsymbol{\beta}_{\text{diag}}^{\prime} &= \left(\beta_{11}^{\prime},\dots, \beta_{(T-1)(T-1)}^{\prime} \right), \\
    \boldsymbol{\beta}_{\text{off-diag}}^{\prime} &= \left(\beta_{\boldsymbol{\nu}_{1}}^{\prime}, \dots, \beta_{\boldsymbol{\nu}_{|\mathcal{V}_{2}|}}^{\prime} \right), \\
    \boldsymbol{\gamma}^{\prime} &= \left( \gamma_{\boldsymbol{\sigma}_{1}}^{\prime}, \dots, \gamma_{\boldsymbol{\sigma}_{|\mathcal{I}_{V}^{S}|}}^{\prime} \right).
\end{align*}

Using the previously defined difference vectors, we construct the following design matrix:
\begin{equation}
  \boldsymbol{X} =
  \biggl(
    \underbrace{\boldsymbol{\Delta}_{1}^{(1)}, \dots, \boldsymbol{\Delta}_{1}^{(T-1)} }_{\text{$\boldsymbol{\alpha}^{\prime}$-block}},
    \
    \underbrace{\boldsymbol{\Delta}_{2}^{(1)}, \dots, \boldsymbol{\Delta}_{2}^{(T-1)} }_{\text{$\boldsymbol{\beta}_{\text{diag}}^{\prime}$-block}},
    \
    \underbrace{\boldsymbol{\Delta}_{\mathcal{V}_2}^{(1)}, \dots, \boldsymbol{\Delta}_{\mathcal{V}_2}^{(|\mathcal{V}_{2}|)} }_{\text{$\boldsymbol{\beta}_{\text{off-diag}}^{\prime}$-block}},
    \
    \underbrace{\boldsymbol X^{S}}_{\text{$\boldsymbol{\gamma}^{\prime}$-block}}
  \biggr),
\end{equation}
where $\boldsymbol{X}^{S}$ is the $r^T \times L^{(r,T)}$ matrix of $1$ or $0$ elements that construct the complete symmetry structure.
This matrix satisfies $\boldsymbol{X}^{S} \boldsymbol{1}_{L^{(r,T)}} = \boldsymbol{1}_{r^T}$, which ensures that each cell belongs to exactly one symmetric class.
The design matrix $\boldsymbol{X}$ has dimension $r^T \times L$, where $L = d_2 + L^{(r,T)}$ and $d_2 = (T^2+3T-6)/2$ represents the total number of parameters for $\boldsymbol{\alpha}^{\prime}$, $\boldsymbol{\beta}_{\text{diag}}^{\prime}$, and $\boldsymbol{\beta}_{\text{off-diag}}^{\prime}$.
Matrix $\boldsymbol{X}$ has the full column rank.

With this construction, the GS$[f]$ model can be expressed in a linear form as
\begin{equation} \label{defn-gs-f-linear}
F \left( \frac{\boldsymbol{\pi}}{\boldsymbol{\pi}^{S}} \right)
= \boldsymbol{X} \boldsymbol{\vartheta}.
\end{equation}
This linear formulation provides an equivalent characterization of the GS$[f]$ model in terms of the constrained optimization, as shown in Appendix \ref{appendix-proofs}. 
It serves as the foundation for the statistical analysis that follows.

To investigate the structure of the goodness-of-fit statistics, we denote the linear space spanned by the columns of matrix $\boldsymbol{X}$ by $S(\boldsymbol{X})$ with dimension $L$.
$S(\boldsymbol{X})$ is subspace of $\mathbb{R}^{r^T}$. 
Let $\boldsymbol{U}$ be an $r^T \times d_1$, where $d_1 = r^T - L$, and a full column rank matrix such that the linear space is spanned by the columns of $\boldsymbol{U}$. 
That is, $S(\boldsymbol{U})$ is an orthogonal complement of space $S(\boldsymbol{X})$. Thus, $\boldsymbol{U}^\top \boldsymbol{X} = \boldsymbol{O}_{d_{1},L}$ where $\boldsymbol{O}_{d_{1},L}$ is a $d_{1} \times L$ zero matrix.

Let $\boldsymbol{h}_{1}(\boldsymbol{\pi})$ and $\boldsymbol{h}_{2}(\boldsymbol{\pi})$ be a vector of functions defined by $\boldsymbol{h}_{1}(\boldsymbol{\pi}) = \boldsymbol{U}^\top F ( \boldsymbol{\pi}/\boldsymbol{\pi}^S )$ and $\boldsymbol{h}_{2} (\boldsymbol{\pi}) = \boldsymbol{M} \boldsymbol{\pi}$ with the $d_2 \times r^T$ matrix $\boldsymbol{M} = (\boldsymbol{\Delta}_{1}^{(1)}, \dots, \boldsymbol{\Delta}_{1}^{(T-1)}, \boldsymbol{\Delta}_{2}^{(1)}, \dots, \boldsymbol{\Delta}_{2}^{(T-1)},$     $\boldsymbol{\Delta}_{\mathcal{V}_2}^{(1)}, \dots, \boldsymbol{\Delta}_{\mathcal{V}_2}^{(|\mathcal{V}_{2}|)})^\top$, respectively. 
Note that $\boldsymbol{M}^\top$ belongs to space $S(\boldsymbol{X})$; namely, $S(\boldsymbol{M}^\top) \subset S(\boldsymbol{X})$. From Theorem \ref{thm-2}, the S model is equivalent to the hypothesis $\boldsymbol{h}_{3} (\boldsymbol{\pi}) = (\boldsymbol{h}_{1}^{\top} (\boldsymbol{\pi}),\boldsymbol{h}_{2}^{\top} (\boldsymbol{\pi}))^\top = \boldsymbol{0}_{d_3}$, where $\boldsymbol{0}_{s}$ is the $s \times 1$ zero vector and $d_3 = d_1+d_2$, because the GS$[f]$ model is equivalent to the hypothesis $\boldsymbol{h}_{1} (\boldsymbol{\pi}) = \boldsymbol{0}_{d_1}$ and the ME$_{2}$ model is equivalent to the hypothesis $\boldsymbol{h}_{2} (\boldsymbol{\pi}) = \boldsymbol{0}_{d_2}$.

Let $\boldsymbol{H}_{s} ( \boldsymbol{\pi} )$ for $s=1,2,3$ denote the $d_s \times r^T$ matrix of partial derivatives of $\boldsymbol{h}_{s} (\boldsymbol{\pi})$ with respect to $\boldsymbol{\pi}$, i.e., $\boldsymbol{H}_{s} ( \boldsymbol{\pi} ) = \partial \boldsymbol{h}_{s} ( \boldsymbol{\pi}) / \partial \boldsymbol{\pi}^\top$ \citep[see][]{magnus2019Matrix}. 
Also, let $\boldsymbol{\Sigma}(\boldsymbol{\pi}) = \text{diag}(\boldsymbol{\pi}) - \boldsymbol{\pi} \boldsymbol{\pi}^\top$, where $\rm diag (\boldsymbol{\pi})$ denotes a diagonal matrix with the $i$th component of $\boldsymbol{\pi}$ as the $i$th diagonal component, and $\boldsymbol{p}$ denotes $\boldsymbol{\pi}$ with $\pi_{\boldsymbol{\iota}_{i}}$ replaced by $p_{\boldsymbol{\iota}_{i}}$, where $p_{\boldsymbol{\iota}_{i}} = n_{\boldsymbol{\iota}_{i}} / n$ with $n=\sum_{\boldsymbol{i} \in \mathcal{I}_{V}}  n_{\boldsymbol{i}}$.
Since $\sqrt{n}(\boldsymbol{p} - \boldsymbol{\pi})$ has an asymptotically multivariate normal distribution with mean $\boldsymbol{0}_{r^T}$ and covariance matrix $\boldsymbol{\Sigma}(\boldsymbol{\pi})$, using the delta method, $\sqrt{n}(\boldsymbol{h}_{3} (\boldsymbol{p}) - \boldsymbol{h}_{3} (\boldsymbol{\pi}))$ has asymptotically a multivariate normal distribution with mean $\boldsymbol{0}_{r^T}$ and covariance matrix
\begin{equation*}
    \boldsymbol{H}_{3} (\boldsymbol{\pi}) \boldsymbol{\Sigma}(\boldsymbol{\pi}) \boldsymbol{H}_{3}^\top (\boldsymbol{\pi}) = 
    \begin{bmatrix}
    \boldsymbol{H}_{1} (\boldsymbol{\pi}) \boldsymbol{\Sigma}(\boldsymbol{\pi}) \boldsymbol{H}_{1}^\top (\boldsymbol{\pi}) & \boldsymbol{H}_{1} (\boldsymbol{\pi}) \boldsymbol{\Sigma}(\boldsymbol{\pi}) \boldsymbol{H}_{2}^\top (\boldsymbol{\pi}) \\
    \boldsymbol{H}_{2} (\boldsymbol{\pi}) \boldsymbol{\Sigma}(\boldsymbol{\pi}) \boldsymbol{H}_{1}^\top (\boldsymbol{\pi}) & \boldsymbol{H}_{2} (\boldsymbol{\pi}) \boldsymbol{\Sigma}(\boldsymbol{\pi}) \boldsymbol{H}_{2}^\top (\boldsymbol{\pi})
    \end{bmatrix}.
\end{equation*}
Then,
\begin{equation*}
    \boldsymbol{H}_{1} (\boldsymbol{\pi}) = \boldsymbol{U}^\top \frac{\partial}{\partial \boldsymbol{\pi}^\top} F \left( \frac{\boldsymbol{\pi}}{\boldsymbol{\pi}^S} \right),
\end{equation*}
and
\begin{equation*}
    \boldsymbol{H}_{2} (\boldsymbol{\pi}) = \boldsymbol{M}.
\end{equation*}

Thus, under the hypothesis $\boldsymbol{h}_{3} (\boldsymbol{\pi}) = \boldsymbol{0}_{d_3}$, we obtain
\begin{equation}
    W_3 = W_1 + W_2,
\end{equation}
where
\begin{equation*}
    W_s = n \boldsymbol{h}_{s}^\top (\boldsymbol{p}) \left( \boldsymbol{H}_{s} (\boldsymbol{p}) \boldsymbol{\Sigma}(\boldsymbol{p}) \boldsymbol{H}_{s}^\top (\boldsymbol{p}) \right)^{-1} \boldsymbol{h}_{s}(\boldsymbol{p}).
\end{equation*}
This result follows from the fact that 
\begin{equation} \label{eq-nondiagonal}
    \boldsymbol{H}_{1} (\boldsymbol{\pi}) \boldsymbol{\Sigma}(\boldsymbol{\pi}) \boldsymbol{H}_{2}^\top (\boldsymbol{\pi}) = \boldsymbol{O}_{d_{1},d_{2}},
\end{equation}
in Appendix \ref{appendix-proofs}.
We show that (\rm{i}) $W_1$ is Wald statistic for the GS$[f]$ model, (\rm{ii}) $W_2$ is that for the ME$_{2}$ model, and (\rm{iii}) $W_3$ is that for the S model.
Furthermore, the Wald statistic $W_s$ contains $d_s$ \textit{df} for $s=1,2,3$.
Based on these results, we propose the following theorem for $T \ge 3$.

\begin{theorem} \label{thm-3}
    When the S model holds, the following equivalence holds:
    \begin{equation}
        W(\textup{S}) = W(\textup{GS}[f]) + W(\textup{ME}_{2}),
    \end{equation}
    where $W(\textup{M})$ is the Wald statistic for model M.
\end{theorem}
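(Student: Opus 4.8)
The plan is to assemble the apparatus already built in Section~\ref{subsubsec:partition}. By Theorem~\ref{thm-2}, the S model is equivalent to the simultaneous validity of the GS$[f]$ and ME$_2$ models, i.e.\ to the composite hypothesis $\boldsymbol{h}_3(\boldsymbol{\pi}) = (\boldsymbol{h}_1^\top(\boldsymbol{\pi}),\boldsymbol{h}_2^\top(\boldsymbol{\pi}))^\top = \boldsymbol{0}_{d_3}$, where $\boldsymbol{h}_1(\boldsymbol{\pi}) = \boldsymbol{0}_{d_1}$ characterizes GS$[f]$ and $\boldsymbol{h}_2(\boldsymbol{\pi}) = \boldsymbol{0}_{d_2}$ characterizes ME$_2$. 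Because the Wald statistic attached to a constraint $\boldsymbol{h}_s(\boldsymbol{\pi}) = \boldsymbol{0}$ is exactly $W_s = n\,\boldsymbol{h}_s^\top(\boldsymbol{p})(\boldsymbol{H}_s(\boldsymbol{p})\boldsymbol{\Sigma}(\boldsymbol{p})\boldsymbol{H}_s^\top(\boldsymbol{p}))^{-1}\boldsymbol{h}_s(\boldsymbol{p})$, this identification gives $W(\mathrm{S}) = W_3$, $W(\mathrm{GS}[f]) = W_1$ and $W(\mathrm{ME}_2) = W_2$. Hence the theorem reduces to the identity $W_3 = W_1 + W_2$, and this in turn follows once one knows that the asymptotic covariance matrix $\boldsymbol{H}_3(\boldsymbol{\pi})\boldsymbol{\Sigma}(\boldsymbol{\pi})\boldsymbol{H}_3^\top(\boldsymbol{\pi})$ is block-diagonal under the null, i.e.\ that \eqref{eq-nondiagonal} holds: a block-diagonal matrix has a block-diagonal inverse, so the partitioned quadratic form defining $W_3$ splits into the sum of the quadratic forms defining $W_1$ and $W_2$.

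The real content is therefore the verification of \eqref{eq-nondiagonal} at the true distribution under the S model. I would compute the two Jacobian blocks explicitly. Since $\boldsymbol{h}_2(\boldsymbol{\pi}) = \boldsymbol{M}\boldsymbol{\pi}$ is linear, $\boldsymbol{H}_2(\boldsymbol{\pi}) = \boldsymbol{M}$ is constant. For $\boldsymbol{H}_1(\boldsymbol{\pi}) = \boldsymbol{U}^\top\boldsymbol{G}(\boldsymbol{\pi})$ with $\boldsymbol{G}(\boldsymbol{\pi}) = \partial F(\boldsymbol{\pi}/\boldsymbol{\pi}^S)/\partial\boldsymbol{\pi}^\top$, two structural facts do the work. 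First, the map $\boldsymbol{\pi}\mapsto F(\boldsymbol{\pi}/\boldsymbol{\pi}^S)$ is homogeneous of degree zero in $\boldsymbol{\pi}$ (the numerator and the symmetric average $\boldsymbol{\pi}^S$ scale together), so Euler's relation yields $\boldsymbol{G}(\boldsymbol{\pi})\boldsymbol{\pi} = \boldsymbol{0}$; therefore $\boldsymbol{H}_1\boldsymbol{\Sigma}\boldsymbol{H}_2^\top = \boldsymbol{U}^\top\boldsymbol{G}(\operatorname{diag}(\boldsymbol{\pi}) - \boldsymbol{\pi}\boldsymbol{\pi}^\top)\boldsymbol{M}^\top = \boldsymbol{U}^\top\boldsymbol{G}\operatorname{diag}(\boldsymbol{\pi})\boldsymbol{M}^\top$, and the rank-one part of $\boldsymbol{\Sigma}$ disappears. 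Second, under the S model the true $\boldsymbol{\pi}$ satisfies $\pi_{\boldsymbol{i}} = \pi_{\boldsymbol{i}}^S$, so every argument of $F'$ equals $1$ and, $f$ being standard, $F'(1) = f''(1) = 1$; a short computation then collapses $\boldsymbol{G}\operatorname{diag}(\boldsymbol{\pi})$ to $\boldsymbol{I} - \boldsymbol{P}_S$, where $\boldsymbol{P}_S = \boldsymbol{X}^S((\boldsymbol{X}^S)^\top\boldsymbol{X}^S)^{-1}(\boldsymbol{X}^S)^\top$ is the orthogonal projector onto the symmetric subspace $S(\boldsymbol{X}^S)$ (here $(\boldsymbol{X}^S)^\top\boldsymbol{X}^S$ is the diagonal matrix of symmetric-class sizes). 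Hence at the null $\boldsymbol{H}_1\boldsymbol{\Sigma}\boldsymbol{H}_2^\top = \boldsymbol{U}^\top(\boldsymbol{I} - \boldsymbol{P}_S)\boldsymbol{M}^\top$.

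It remains to show this is $\boldsymbol{O}$. Each column of $\boldsymbol{M}^\top$ is one of the difference vectors $\boldsymbol{\Delta}_1^{(h)}$, $\boldsymbol{\Delta}_2^{(h)}$ or $\boldsymbol{\Delta}_{\mathcal{V}_2}^{(k)}$, i.e.\ a difference of scores, squared scores, or products of scores across a pair of coordinates. Averaging such a difference over all permutations of a fixed index vector gives the zero vector, since the two coordinates being compared range over the same multiset of scores; thus $\boldsymbol{P}_S\boldsymbol{M}^\top = \boldsymbol{O}$. At the same time $S(\boldsymbol{M}^\top)\subset S(\boldsymbol{X})$ and $\boldsymbol{U}$ spans the orthogonal complement of $S(\boldsymbol{X})$, so $\boldsymbol{U}^\top\boldsymbol{M}^\top = \boldsymbol{O}$. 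Combining, $\boldsymbol{U}^\top(\boldsymbol{I} - \boldsymbol{P}_S)\boldsymbol{M}^\top = \boldsymbol{U}^\top\boldsymbol{M}^\top - \boldsymbol{U}^\top\boldsymbol{P}_S\boldsymbol{M}^\top = \boldsymbol{O}$, which is \eqref{eq-nondiagonal}. Since $\boldsymbol{U}$ and $\boldsymbol{M}$ have full column/row rank and $\boldsymbol{\pi}>\boldsymbol{0}$, the diagonal blocks $\boldsymbol{H}_s\boldsymbol{\Sigma}\boldsymbol{H}_s^\top$ are nonsingular with $\operatorname{rank}\boldsymbol{H}_s = d_s$, so the block-diagonal inversion gives $W_3 = W_1 + W_2$ with $d_3 = d_1 + d_2$ degrees of freedom, consistently with Table~\ref{table:data_type}; reading the labels back yields $W(\mathrm{S}) = W(\mathrm{GS}[f]) + W(\mathrm{ME}_2)$.

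I expect the main obstacle to be the second structural fact: carrying out the differentiation of $F(\boldsymbol{\pi}/\boldsymbol{\pi}^S)$ and checking carefully, with attention to which indices range over cells and which over symmetry classes, that $\boldsymbol{G}\operatorname{diag}(\boldsymbol{\pi})$ reduces to $\boldsymbol{I} - \boldsymbol{P}_S$ when $\boldsymbol{\pi} = \boldsymbol{\pi}^S$, together with the combinatorial claim that the permutation average of each of the three families of difference-score vectors is zero. Once \eqref{eq-nondiagonal} is established, the remainder is the routine algebra of partitioned quadratic forms.
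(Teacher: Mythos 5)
Your proposal is correct and takes essentially the same route as the paper: reduce the claim, via Theorem \ref{thm-2} and the hypotheses $\boldsymbol{h}_1=\boldsymbol{0}$, $\boldsymbol{h}_2=\boldsymbol{0}$, to the block-diagonality \eqref{eq-nondiagonal} of the asymptotic covariance of $\boldsymbol{h}_3$ under the S model, and then split the partitioned quadratic form $W_3=W_1+W_2$. Your only local deviations—Euler's homogeneity relation in place of the paper's explicit Jacobian computation yielding $\boldsymbol{H}_1(\boldsymbol{\pi})\boldsymbol{\pi}=\boldsymbol{0}$, and annihilating the averaging term through $\boldsymbol{P}_S\boldsymbol{M}^{\top}=\boldsymbol{O}$ (your $\boldsymbol{P}_S$ is exactly the paper's matrix $\boldsymbol{J}$) rather than through $\boldsymbol{U}^{\top}\boldsymbol{J}=\boldsymbol{O}$—are equivalent justifications of the same steps and both are valid.
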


Given the asymptotic equivalence of the Wald and likelihood ratio statistics \cite[see][]{rao1973Large}, we can extend the result of Theorem \ref{thm-3} to likelihood ratio statistics, leading to the following theorem:

\begin{theorem} \label{thm-4}
    When the S model holds, the following asymptotic equivalence holds:
    \begin{equation*}
        G^{2}(\textup{S}) = G^{2}(\textup{GS}[f]) + G^{2}(\textup{ME}_{2}) + o_p(1).
    \end{equation*}
\end{theorem}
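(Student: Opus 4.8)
The plan is to deduce Theorem \ref{thm-4} from Theorem \ref{thm-3} by invoking the classical first-order asymptotic equivalence between the Wald and the likelihood ratio statistics for correctly specified, smoothly constrained multinomial models \citep{rao1973Large}. The first step is to note that, when the S model holds, Theorem \ref{thm-2} implies that both the GS$[f]$ and the ME$_{2}$ models also hold, so each of the three hypotheses S, GS$[f]$, and ME$_{2}$ is correctly specified at the true distribution $\boldsymbol{\pi}$. For each of these models the defining restriction has been written in Section \ref{subsubsec:partition} as a smooth vector equation $\boldsymbol{h}_{s}(\boldsymbol{\pi}) = \boldsymbol{0}_{d_{s}}$: $\boldsymbol{h}_{1}$ for GS$[f]$, $\boldsymbol{h}_{2}$ for ME$_{2}$, and $\boldsymbol{h}_{3} = (\boldsymbol{h}_{1}^{\top},\boldsymbol{h}_{2}^{\top})^{\top}$ for S. The regularity conditions required for the equivalence — continuous differentiability of $\boldsymbol{h}_{s}$ near $\boldsymbol{\pi}$ together with full row rank of its Jacobian $\boldsymbol{H}_{s}(\boldsymbol{\pi})$, so that $\boldsymbol{H}_{s}(\boldsymbol{\pi})\boldsymbol{\Sigma}(\boldsymbol{\pi})\boldsymbol{H}_{s}^{\top}(\boldsymbol{\pi})$ is nonsingular — are exactly what was established there, using $F = f' \in C^{1}$ with $F' = f'' > 0$, the strict positivity $\pi_{\boldsymbol{i}} > 0$, the full column rank of $\boldsymbol{X}$ and $\boldsymbol{U}$, and the vanishing cross-term \eqref{eq-nondiagonal}.

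Granting this, I would apply the equivalence separately to the three models to obtain $G^{2}(\textup{S}) = W(\textup{S}) + o_{p}(1)$, $G^{2}(\textup{GS}[f]) = W(\textup{GS}[f]) + o_{p}(1)$, and $G^{2}(\textup{ME}_{2}) = W(\textup{ME}_{2}) + o_{p}(1)$, the degrees of freedom matching automatically since $d_{3} = d_{1} + d_{2}$ (cf.\ Table \ref{table:data_type}). Substituting the first identity into Theorem \ref{thm-3} and then replacing $W(\textup{GS}[f])$ and $W(\textup{ME}_{2})$ by the corresponding $G^{2}$ quantities, and absorbing the finitely many resulting $o_{p}(1)$ remainders into a single $o_{p}(1)$, gives
\[
    G^{2}(\textup{S}) = G^{2}(\textup{GS}[f]) + G^{2}(\textup{ME}_{2}) + o_{p}(1),
\]
which is the assertion.

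The main (and essentially only) obstacle is justifying the regularity hypotheses for the GS$[f]$ constraint $\boldsymbol{h}_{1}(\boldsymbol{\pi}) = \boldsymbol{U}^{\top} F(\boldsymbol{\pi}/\boldsymbol{\pi}^{S})$, whose nonlinearity enters both through the link $F$ and through the dependence of $\boldsymbol{\pi}^{S}$ on $\boldsymbol{\pi}$; one must compute $\boldsymbol{H}_{1}(\boldsymbol{\pi})$ by the chain rule and confirm that it has rank $d_{1}$, which is already carried out in Section \ref{subsubsec:partition} and Appendix \ref{appendix-proofs}. A secondary, routine point is to confirm that the $o_{p}(1)$ terms are genuinely negligible along the sequence of multinomial experiments considered, which follows from consistency of the MLEs and the $\sqrt{n}$-asymptotic normality of $\boldsymbol{p}$ already used in Section \ref{subsubsec:partition}. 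Once these are in place, the proof reduces to the short chain of substitutions displayed above.
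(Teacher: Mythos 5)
Your proposal is correct and follows essentially the same route as the paper: the paper derives Theorem \ref{thm-4} directly from Theorem \ref{thm-3} by invoking the classical asymptotic equivalence of the Wald and likelihood ratio statistics under the null (citing Rao), which is exactly your chain of substitutions. Your version simply spells out the regularity checks and the three separate applications of the equivalence that the paper leaves implicit.
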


Theorem \ref{thm-4} extends the partitioning result of Theorem \ref{thm-3} to the likelihood ratio statistics that are commonly used.
This theorem suggests that when the S model holds, the likelihood ratio statistic for testing the S model can be asymptotically partitioned into the sum of the likelihood ratio statistics for testing the GS$[f]$ and ME$_{2}$ models. 
This partition allows researchers to identify sources of asymmetry in the data by examining the goodness-of-fit statistics of the individual components. 
It also provides a convenient method for testing the S model by combining the tests for the GS$[f]$ and ME$_{2}$ models.

%=sec:Simulation===========================================
\section{Simulation Study} 
\label{sec:simulation}
For two-way contingency tables, \cite{agresti1983Simple} indicates that the LS model is adequate for tables underlying bivariate normal distributions with different means and equal variances. 
Subsequently, \cite{tomizawa1991Extended} proposed an ELS model to accommodate unequal marginal variances. 
For multi-way contingency tables, \cite{yamamoto2007Decomposition} introduced the GS model and examined its performance using $4 \times 4 \times 4$ contingency tables generated from underlying trivariate normal distributions with various means, variances, and correlation coefficients.
However, their analysis was limited to a single table for each parameter configuration rather than assessing the empirical power through repeated simulations.

The primary objectives of this simulation study are as follows:
First, we aim to evaluate the empirical power of the test for the proposed model in detecting asymmetric structures underlying multivariate normal distributions, which extend beyond one-shot analysis \citep{yamamoto2007Decomposition}. 
Second, we demonstrate that the GS[$f$] model provides a good fit to the contingency tables generated from multivariate normal distributions across various parameter settings.
Third, we examine whether varying the $f$-function affects the performance of the model.
This indicates flexibility in choosing appropriate divergence measures based on the conditional probability relationships of interest without compromising fit quality.

In this study, we investigated the properties of the following models:
\begin{itemize}
    \item KL-divergence-based models: GS, ELS, and LS
    \item Pearson's $\chi^2$ divergence-based models: PGS, PELS, and PLS
    \item Hellinger distance-based models: HGS, HELS, and HLS
\end{itemize}
Note that model names prefixed with `P' refer to Pearson’s $\chi^2$-divergence-based models, whereas those prefixed with `H' are Hellinger distance-based models.

In this simulation study, we assume random sampling from an underlying trivariate normal distribution characterized by mean vector $(\mu_{1}, \mu_{2}, \mu_{3})$, variance vector $(\sigma_{1}^{2}, \sigma_{2}^{2}, \sigma_{3}^{2})$, and correlation coefficient vector $(\rho_{12}, \rho_{13}, \rho_{23})$. 
We considered eight different scenarios, representing all combinations of the following parameters:
\begin{align*}
    (\mu_{1}, \mu_{2}, \mu_{3})                                 &= (0.0, 0.0, 0.0), (0.0, -0.1, 0.1), \\
    (\sigma_{1}^{2}, \sigma_{2}^{2}, \sigma_{3}^{2})            &= (1.0, 1.0, 1.0), (1.0, 1.2, 1.4), \\
    (\rho_{12}, \rho_{13}, \rho_{23})   &= (0.2, 0.2, 0.2), (0.2, 0.3, 0.4).
\end{align*}
The simulation procedure is as follows:
\begin{enumerate}[leftmargin=*, label=Step \arabic*.] \itemsep=2mm
    \item Generate 10,000 random numbers from a trivariate normal distribution with parameters\\ $(\mu_{1}, \mu_{2}, \mu_{3})$, $(\sigma_{1}^{2}, \sigma_{2}^{2}, \sigma_{3}^{2})$, and $(\rho_{12}, \rho_{13}, \rho_{23})$.
    
    \item A $4 \times 4 \times 4$ contingency table is formed using cut points for all three variables at $\mu_1$, $\mu_1 \pm 0.6\sigma_1$.
    
    \item Models are fitted for the table.
    
    \item Steps 1, 2, and 3 are repeated 10,000 times.
\end{enumerate}

Table \ref{tab:simulation} lists the empirical power values of the test for each model for eight different configurations.
First, for the configuration in which the symmetric structure holds (i.e., when all means, variances, and correlation coefficients are identical), all the models consistently yield empirical power values below $0.05$. 
This demonstrates that they effectively control the Type I error rate, correctly identify the underlying symmetry, and avoid the spurious detection of asymmetry when none exists.

Examining the results for the nested LS$[f]$ model shows limited explanatory power, primarily accounting for the shifts in means.
Their empirical power rapidly escalates to $1$ when deviations exist in the variances or correlation coefficients. 
Similarly, the ELS$[f]$ model can explain the deviations in both the means and variances.
Additionally, they show a significant increase in empirical power when the correlation coefficients show differences.

Conversely, the GS$[f]$ model consistently demonstrated appropriate explanatory power across all configurations.
Even when other nested models exhibited an empirical power of $1$, our proposed models maintained moderate power. 
This highlights their superior ability to detect various forms of asymmetry generated from a multivariate normal distribution.

Furthermore, these findings hold irrespective of the chosen $f$-function (KL-divergence, Pearson's $\chi^2$-divergence, or Hellinger distance). 
This robust performance across different divergence measures suggests that the proposed GS$[f]$ model offers considerable flexibility in selecting an appropriate divergence, while consistently providing good fits and accurate detection of asymmetric structures.

\begin{table}[htbp]
  \caption{Empirical power of the test for each model for trivariate normal distributions across different parameter configurations.}
  \label{tab:simulation}
  \centering
  \begin{tabular}{ccccccc}
   \toprule
    &  &  & \multicolumn{4}{c}{KL-divergence based Models} \\
   \cmidrule(lr){4-7}
    ($\mu_{1}, \ \mu_{2}, \ \mu_{3}$) & ($\sigma_{1}^{2}, \ \sigma_{2}^{2}, \ \sigma_{3}^{2}$) & ($\rho_{12}, \ \rho_{13}, \ \rho_{23}$) & S & LS & ELS & GS \\
   \midrule
   ($0.0, 0.0, 0.0$) & ($1.0, 1.0, 1.0$) & ($0.2, 0.2, 0.2$) & $0.0479$ & $0.0482$ & $0.0473$ & $0.0495$ \\
   ($0.0, 0.0, 0.0$) & ($1.0, 1.2, 1.4$) & ($0.2, 0.2, 0.2$) & $1$ & $1$ & $0.0549$ & $0.0504$ \\
   ($0.0, 0.0, 0.0$) & ($1.0, 1.0, 1.0$) & ($0.2, 0.3, 0.4$) & $1$ & $1$ & $1$ & $0.1186$ \\
   ($0.0, 0.0, 0.0$) & ($1.0, 1.2, 1.4$) & ($0.2, 0.3, 0.4$) & $1$ & $1$ & $1$ & $0.1432$ \\ \\

   ($0.0, -0.1, 0.1$) & ($1.0, 1.0, 1.0$) & ($0.2, 0.2, 0.2$) & $1$ & $0.0786$ & $0.0799$ & $0.0796$ \\
   ($0.0, -0.1, 0.1$) & ($1.0, 1.2, 1.4$) & ($0.2, 0.2, 0.2$) & $1$ & $1$ & $0.0804$ & $0.0746$ \\
   ($0.0, -0.1, 0.1$) & ($1.0, 1.0, 1.0$) & ($0.2, 0.3, 0.4$) & $1$ & $1$ & $1$ & $0.1751$ \\
   ($0.0, -0.1, 0.1$) & ($1.0, 1.2, 1.4$) & ($0.2, 0.3, 0.4$) & $1$ & $1$ & $1$ & $0.2068$ \\
   \bottomrule \\
   \toprule
    &  &  & \multicolumn{4}{c}{Pearson's $\chi^2$-divergence based Models} \\
   \cmidrule(lr){4-7}
    ($\mu_{1}, \ \mu_{2}, \ \mu_{3}$) & ($\sigma_{1}^{2}, \ \sigma_{2}^{2}, \ \sigma_{3}^{2}$) & ($\rho_{12}, \ \rho_{13}, \ \rho_{23}$) & S & PLS & PELS & PGS \\
   \midrule
   ($0.0, 0.0, 0.0$) & ($1.0, 1.0, 1.0$) & ($0.2, 0.2, 0.2$) & $0.0477$ & $0.0482$ & $0.0476$ & $0.0492$ \\
   ($0.0, 0.0, 0.0$) & ($1.0, 1.2, 1.4$) & ($0.2, 0.2, 0.2$) & $1$ & $1$ & $0.0558$ & $0.0522$ \\
   ($0.0, 0.0, 0.0$) & ($1.0, 1.0, 1.0$) & ($0.2, 0.3, 0.4$) & $1$ & $1$ & $1$ & $0.1488$ \\
   ($0.0, 0.0, 0.0$) & ($1.0, 1.2, 1.4$) & ($0.2, 0.3, 0.4$) & $1$ & $1$ & $1$ & $0.1564$ \\ \\

   ($0.0, -0.1, 0.1$) & ($1.0, 1.0, 1.0$) & ($0.2, 0.2, 0.2$) & $1$ & $0.093$ & $0.0932$ & $0.0792$ \\
   ($0.0, -0.1, 0.1$) & ($1.0, 1.2, 1.4$) & ($0.2, 0.2, 0.2$) & $1$ & $1$ & $0.1297$ & $0.1011$ \\
   ($0.0, -0.1, 0.1$) & ($1.0, 1.0, 1.0$) & ($0.2, 0.3, 0.4$) & $1$ & $1$ & $1$ & $0.4338$ \\
   ($0.0, -0.1, 0.1$) & ($1.0, 1.2, 1.4$) & ($0.2, 0.3, 0.4$) & $1$ & $1$ & $1$ & $0.3567$ \\
   \bottomrule \\
   \toprule
    &  &  & \multicolumn{4}{c}{Hellinger distance based Models} \\
   \cmidrule(lr){4-7}
    ($\mu_{1}, \ \mu_{2}, \ \mu_{3}$) & ($\sigma_{1}^{2}, \ \sigma_{2}^{2}, \ \sigma_{3}^{2}$) & ($\rho_{12}, \ \rho_{13}, \ \rho_{23}$) & S & HLS & HELS & HGS \\
   \midrule
   ($0.0, 0.0, 0.0$) & ($1.0, 1.0, 1.0$) & ($0.2, 0.2, 0.2$) & $0.0478$ & $0.0483$ & $0.0475$ & $0.0492$ \\
   ($0.0, 0.0, 0.0$) & ($1.0, 1.2, 1.4$) & ($0.2, 0.2, 0.2$) & $1$ & $1$ & $0.0555$ & $0.0506$ \\
   ($0.0, 0.0, 0.0$) & ($1.0, 1.0, 1.0$) & ($0.2, 0.3, 0.4$) & $1$ & $1$ & $1$ & $0.1247$ \\
   ($0.0, 0.0, 0.0$) & ($1.0, 1.2, 1.4$) & ($0.2, 0.3, 0.4$) & $1$ & $1$ & $1$ & $0.1471$ \\ \\

   ($0.0, -0.1, 0.1$) & ($1.0, 1.0, 1.0$) & ($0.2, 0.2, 0.2$) & $1$ & $0.0804$ & $0.0803$ & $0.0791$ \\
   ($0.0, -0.1, 0.1$) & ($1.0, 1.2, 1.4$) & ($0.2, 0.2, 0.2$) & $1$ & $1$ & $0.0798$ & $0.0769$ \\
   ($0.0, -0.1, 0.1$) & ($1.0, 1.0, 1.0$) & ($0.2, 0.3, 0.4$) & $1$ & $1$ & $1$ & $0.1784$ \\
   ($0.0, -0.1, 0.1$) & ($1.0, 1.2, 1.4$) & ($0.2, 0.3, 0.4$) & $1$ & $1$ & $1$ & $0.2054$ \\
   \bottomrule
  \end{tabular}
\end{table}

%=sec: Evaluation of the proposed model========================================
\section{Empirical Analysis} \label{sec:example}
This section assesses the empirical performance of the proposed GS$[f]$ model using a real-world dataset.
However, the analysis is now casts entirely within the Homogeneous Linear Predictor (HLP) framework \citep{lang2005Homogeneous}, which is implemented in \textsf{R} with \texttt{mph.fit}.
The HLP model is defined as $L(\boldsymbol{\pi})=\boldsymbol{X} \boldsymbol{\vartheta}$ with a smooth link function $L$ and design matrix $\boldsymbol{X}$.
In the following, we report the goodness-of-fit statistics for the GS$[f]$ model and each nested alternative.
Herein, we set $\alpha_{3}=\beta_{33}=\beta_{23}=0$ without loss of generality and naturally adopt equally spaced scores $\{u_{i} = i \}$.

% \subsection{Example 1} \label{subsec:example1}
Table \ref{table:party} presents data from the American National Election Study (ANES) 2020–2022 Social Media Study \citep{anes2023socialmedia}, a comprehensive panel study of the U.S. adults conducted online.
Respondents were asked, ``Do you think of yourself as closer to the Republican Party or to the Democratic Party?'' with the response options of (1) Closer to the Republican Party, (2) Closer to the Democratic Party, and (3) Neither. 
To ensure ordinal interpretation, we recoded the categories such that ``(3) Neither'' is treated as the middle category, yielding the ordering: (1) Republican, (2) Neither, (3) Democratic.
The survey was administered across three waves: a pre- and post-election study of the 2020 presidential election, followed by the 2022 midterm elections.

Table \ref{table:party_result} summarizes the likelihood ratio chi-square statistics for the various models fitted to the data in Table \ref{table:party}. 
The GS$[f]$ model provided an adequate fit to the data at the 5\% significance level.
However, nested models, such as the ELS and LS, were rejected because of the asymmetrical structures in the correlations.
This asymmetric dependence supports the need to use the GS$[f]$ modeling framework rather than its nested models to accurately capture the transition of political identification over time.

\begin{table}[ht]
  \caption{Political identification across three waves of the ANES 2020--2022 Social Media Study}
  \label{table:party}
  \centering
  \begin{tabular}{llccc}
   \toprule
     & & \multicolumn{3}{c}{Post-election (2022)} \\
   \cmidrule(lr){3-5}
     Pre-election (2020) & Post-election (2020) & (1) Republican & (2) Neither & (3) Democratic \\
   \midrule 
                    & (1) Republican  & 240 & 32 & 8 \\
   (1) Republican   & (2) Neither     &  11 & 23 & 5 \\
                    & (3) Democratic  &   0 &  2 & 4 \\
   \midrule
                    & (1) Republican  & 20 &  22 &  4 \\
    (2) Neither     & (2) Neither     & 18 & 237 & 28 \\
                    & (3) Democratic  &  1 &  24 & 29 \\
   \midrule
                    & (1) Republican  & 4 &  0 & 5 \\
   (3) Democratic   & (2) Neither     & 0 & 28 & 16 \\
                    & (3) Democratic  & 7 & 36 & 323 \\
   \bottomrule
  \end{tabular}
\end{table}

\setlength{\tabcolsep}{5pt}
\begin{table}[ht]
\caption{Likelihood ratio chi-square values $G^2$ for models applied to the data in Table \ref{table:party}.}
\label{table:party_result}
\centering
    \begin{tabular}{ccccccccccc}
     \toprule
     & & & & & & \multicolumn{3}{c}{Proposed Models} & \multicolumn{2}{c}{Nested Models} \\
     \cmidrule(lr){7-9} \cmidrule(lr){10-11}
     Models     & S & ME$_2$ & ME & VE & CE & GS & PGS & HGS & ELS & LS \\
     \midrule
     \textit{df}& $17$ & $6$ & $2$ & $2$ & $2$ & $11$ & $11$ & $11$ & $13$ & $15$ \\
     $G^2$      & $45.3$ & $31.6$ & $3.34$ & $9.89$ & $17.4$ & $15.5$ & $13.7$ & $16.0$ & $33.0$ & $41.5$ \\
     $p$-value  & $<0.001$ & $<0.001$ & $0.188$ & $0.007$ & $<0.001$ & $0.162$ & $0.249$ & $0.143$ & $0.002$ & $<0.001$ \\
    \bottomrule
    \end{tabular}
\end{table}
\setlength{\tabcolsep}{6pt}

Table \ref{table:party_potential} presents the plug-in estimates of the potential parameters for the GS, PGS, and HGS models. 
These estimates were obtained by substituting the MLEs of the model parameters $\boldsymbol{\vartheta}$ into the model-specific expressions for each potential parameter $\theta_{\boldsymbol{i}}$. 
Despite the differences in the selected $f$-function, the relative ordering of the estimates within each group remained consistent across the models.
However, the HGS model inverts the order of the potential parameters, although the overall ranking remains aligned with that of the GS and PGS models (see Section \ref{subsec:specialcase}).

Group 1 exhibits particularly large discrepancies across the models, indicating a strong deviation from complete symmetry.
This is evident in the following relationships, which are higher than all the deviations in the other groups:
\begin{equation}
    \frac{\hat \pi_{(1,1,3)}^{c}}{\hat \pi_{(1,3,1)}^{c}} = 8.27, \quad
    \hat \pi_{(1,1,3)}^{c} - \hat \pi_{(1,3,1)}^{c} = 0.71, \quad
    (\hat \pi_{(1,1,3)}^{c})^{-\frac{1}{2}} - (\hat \pi_{(1,3,1)}^{c})^{-\frac{1}{2}} = - 1.88.
\end{equation}
These quantities can be interpreted in light of the benchmark values under complete symmetry: the ratio-based measure (KL-divergence case) is equal to $1$, the difference-based measure (Pearson’s $\chi^2$ divergence case) is equal to $0$, and the Hellinger-type measure is also equal to $0$. 
Thus, departures from these values indicate the degree of asymmetry present within specific groups.
The ratio- and difference-based formulations offer intuitive interpretations in terms of multiplicative and additive deviations from symmetry, respectively. 
In contrast, the Hellinger-type formulation is less directly interpretable but serves to illustrate how the GS[$f$] framework can express a wider class of $f$-divergence–based relationships. 
To our knowledge, this perspective has not been emphasized in previous studies, and it highlights the added interpretability that the GS[$f$] framework can bring to the analysis of asymmetry in contingency tables.

Although conventional models are limited in capturing symmetric or ratio-based structures, the GS[$f$] framework flexibly accommodates a broader range of asymmetric relationships among conditional probabilities. 
These differences illustrate that each model embodies a distinct concept of asymmetry. 
Consequently, researchers should select the $f$-function that best aligns with their analytical objectives, considering whether ratio-based, difference-based, or other forms of structural deviation are most substantively meaningful in the given application.

\begin{table}[ht]
\caption{Plug-in estimates of potential parameters $\theta_{\boldsymbol{i}}$ for the GS, PGS, and HGS models computed using the MLEs of the model parameters $\boldsymbol{\vartheta}$ fitted to the data in Table \ref{table:party}.}
\label{table:party_potential}
\centering
    \begin{tabular}{lcccc}
        \toprule
        & & \multicolumn{3}{c}{Models} \\
        \cmidrule(lr){3-5}
        Groups & \multirow{2}{*}[3mm]{\parbox{4cm}{\centering Plug-in estimates \\ of potential parameters}} & GS & PGS & HGS \\
        \midrule
        Group 1 & $\theta_{(1,1,3)}$ & $0.0161$   & $-1.3085$    & $3.7307$ \\
                & $\theta_{(3,1,1)}$ & $0.0057$            & $-1.7262$             & $4.5959$          \\
                & $\theta_{(1,3,1)}$ & $0.0019$            & $-2.0173$             & $5.6129$          \\
        \\
        Group 2 & $\theta_{(1,1,2)}$ & $0.0249$   & $-1.1937$    & $3.3065$ \\
                & $\theta_{(2,1,1)}$ & $0.0139$            & $-1.4227$             & $3.7881$          \\
                & $\theta_{(1,2,1)}$ & $0.0078$            & $-1.5785$             & $4.3284$          \\
        \\
        Group 3 & $\theta_{(3,3,1)}$ & $0.0011$   & $-2.1269$    & $6.2432$ \\
                & $\theta_{(1,3,3)}$ & $0.0008$            & $-2.2469$             & $6.4613$          \\
                & $\theta_{(3,1,3)}$ & $0.0004$            & $-2.4150$             & $7.1410$          \\
        \\
        Group 4 & $\theta_{(3,3,2)}$ & $0.0003$   & $-2.4713$    & $7.5158$ \\
                & $\theta_{(2,3,3)}$ & $0.0002$            & $-2.5515$             & $7.6739$          \\
                & $\theta_{(3,2,3)}$ & $0.0002$            & $-2.6458$             & $8.0455$          \\
        \\
        Group 5 & $\theta_{(1,2,3)}$ & $0.0033$   & $-0.9040$    & $7.3211$ \\
                & $\theta_{(2,1,3)}$ & $0.0024$            & $-0.9409$             & $7.7568$          \\
                & $\theta_{(3,2,1)}$ & $0.0022$            & $-0.9785$             & $7.7786$          \\
                & $\theta_{(3,1,2)}$ & $0.0015$            & $-1.0353$             & $8.2992$          \\
                & $\theta_{(2,3,1)}$ & $0.0014$            & $-1.0461$             & $8.4529$          \\
                & $\theta_{(1,3,2)}$ & $0.0013$            & $-1.0661$             & $8.5378$          \\                
        \\
        Group 6 & $\theta_{(2,1,2)}$ & $0.0058$   & $-1.6523$    & $4.6365$ \\
                & $\theta_{(1,2,2)}$ & $0.0051$            & $-1.6933$             & $4.7526$          \\
                & $\theta_{(2,2,1)}$ & $0.0039$            & $-1.7878$             & $4.9634$          \\                
        \\
        Group 7 & $\theta_{(2,2,3)}$ & $0.0007$   & $-2.2471$    & $6.6602$ \\
                & $\theta_{(3,2,2)}$ & $0.0006$            & $-2.3013$             & $6.7729$          \\
                & $\theta_{(2,3,2)}$ & $0.0006$            & $-2.3219$             & $6.8255$          \\        
        \bottomrule
    \end{tabular}
\end{table}

%=sec:Conclusion============================================
\section{Conclusion} \label{sec:conclusion}
This study proposes a novel framework for modeling multivariate categorical data by incorporating $f$-divergence.
The proposed model extends the existing approaches by allowing for a more general class of divergence measures in the maximum entropy principle, offering a theoretical basis for understanding traditional models.
Furthermore, it captures diverse probabilistic structures by selecting $f$-functions, thereby addressing the practical need to tailor models to specific data and research objectives.
Additionally, the potential parameters used to assess the effects of individual cells enhance interpretability.

Future studies should extend this framework in two main directions. 
The first objective is to construct a theoretical framework for maximum-entropy models under fixed general constraints.
Second, we explore modeling strategies based on alternative divergence measures such as the Bregman divergence, which plays a central role in information geometry owing to its strong geometric interpretation \citep{csiszar2004Information, murata2004Information, amari2010Information}.
Together, these avenues of investigation aim to deepen the theoretical understanding and broaden the applicability of the model across diverse research contexts.

\section*{Acknowledgements}
This work was supported by JST SPRING (Grant Number JPMJSP2151) and JSPS KAKENHI (Grant Number JP20K03756).
The authors sincerely thank Professor Joseph B. Lang for generously sharing his \textsf{R} code, which was essential for the analysis in this study, and Dr. Kengo Fujisawa for his valuable advice.

\section*{Data Availability}
The data analyzed in this study are publicly available from the American National Election Study (ANES) at \url{https://electionstudies.org/}.

\section*{Code Availability}
All the \textsf{R} scripts required to reproduce the analyses in this study are openly available at \url{https://github.com/h-okahara/GS_f}.
The analyses also rely on function \texttt{mph.fit} contained in the file \texttt{mph.Rcode.R}, which is not included in the GitHub repository.
Researchers who wish to use \texttt{mph.fit} can contact Professor Joseph B. Lang directly to obtain this file.

%=Appendix===============================================================
\section*{Appendix A Notation Summary} \label{appendix-notation}
Table \ref{table:notation} summarizes the notations used in this study.
\begin{table}[htbp]
  \caption{Notation reference}
  \label{table:notation}
  \centering
  \begin{tabular}{cl}
    \toprule
    \textbf{Symbol}                 & \multicolumn{1}{c}{\textbf{Definition}} \\
    \midrule
    $V$                             & The indices set, usually $\{1,\dots,T\}$ \\
    \addlinespace[1pt]
    $\mathcal{I}$                   & Levels of any variable, by default $\{1,\dots,r\}$    \\
    \addlinespace[1pt]
    $\boldsymbol{i}$                & $(i_1,\dots,i_T)$ with $i_j \in \mathcal{I}$, generic notation to denote a cell   \\
    \addlinespace[1pt]
    $\mathcal{I}_{V}$               & $\mathcal{I}^{|V|} = \mathcal{I} \times \cdots \times \mathcal{I}$, the collection of all cells   \\
    \addlinespace[1pt]
    $D(\boldsymbol{i})$             & This set contains all permutations of the components of $\boldsymbol{i}$  \\
    \addlinespace[1pt]
    $\boldsymbol{\pi}$              & Joint \textrm{p.m.f} of $(X_1,\dots,X_T)$, $\boldsymbol{\pi} = \{\pi_{\boldsymbol{i}}\}_{\boldsymbol{i} \in \mathcal{I}_{V}}$   \\
    \addlinespace[1pt]
    $\boldsymbol{\pi}^{S}$          & Joint \textrm{p.m.f} of the complete symmetry, $\boldsymbol{\pi}^{S} = \{\pi_{\boldsymbol{i}}^{S}\}_{\boldsymbol{i} \in \mathcal{I}_{V}}$ \\ 
    \addlinespace[1pt]
    $\mathcal{I}_{V}^{S}$           & Subset of $\mathcal{I}_{V}$ that imposes restrictions on the order of its elements    \\
    \addlinespace[1pt]
    $\mathcal{V}_{2}$               & Subset of $V \times V$, the collection of all index of off-diagonal of $\boldsymbol{B}$   \\
    \addlinespace[1pt]
    $\preceq$                       & The lexicographical ordering defined on finite vector space $\mathbb{R}^{n}$  \\
    \addlinespace[1pt]
    $\boldsymbol{\iota}$            & $(\boldsymbol{\iota}_{1},\dots,\boldsymbol{\iota}_{|\mathcal{I}_{V}|})$ is elements of $\mathcal{I}_{V}$ arranged in lexicographic order \\
    \addlinespace[1pt]
    $\boldsymbol{\sigma}$           & $(\boldsymbol{\sigma}_{1},\dots,\boldsymbol{\sigma}_{|\mathcal{I}_{V}^{S}|})$ is elements of $\mathcal{I}_{V}^{S}$ arranged in lexicographic order   \\
    \addlinespace[1pt]
    $\boldsymbol{\nu}$              & $(\boldsymbol{\nu}_{1},\dots,\boldsymbol{\nu}_{|\mathcal{V}_{2}|}\}$ is elements of $\mathcal{V}_{2}$ arranged in lexicographic order \\
    \addlinespace[1pt]
    $\boldsymbol{s}_{h}$            & Score vector for variable $h$, defined as $\boldsymbol{1}_{r^{h-1}} \otimes (u_{1}, \dots, u_{r})^\top \otimes \boldsymbol{1}_{r^{T-h}}$ \\
    \addlinespace[1pt]
    $\boldsymbol{\Delta}_{1}^{(h)}$ & First-order difference for $h \in \{1,\dots,T-1\}$ \\
    \addlinespace[1pt]
    $\boldsymbol{\Delta}_{2}^{(h)}$ & Second-order difference for $h \in \{1,\dots,T-1\}$ \\
    \addlinespace[1pt]
    $\boldsymbol{\Delta}_{\mathcal{V}_2}^{(k)}$ & Second-order interaction difference for pair $\boldsymbol{\nu}_{k} = (s,t) \in \mathcal{V}_2$ \\
    \addlinespace[1pt]
    \bottomrule
  \end{tabular}
\end{table}

%=Appendix1-Proofs=========================================================
\section*{Appendix B Proofs} \label{appendix-proofs}

%=Proof 1=========================================================
\subsection*{Proof of Theorem \ref{thm-2}}
If the S model holds, then the GS$[f]$ and ME$_2$ models also hold. 
Assuming that these models hold, we show that the S model holds.
Let $\{\hat \pi_{\boldsymbol{i}}\}$ denote the cell probabilities satisfy the structures of GS$[f]$ and ME$_2$ models. 
$\{\hat \pi_{\boldsymbol{i}}\}$ satisfies the GS$[f]$ model.
\begin{equation*}
    F \biggl( \frac{\hat \pi_{\boldsymbol{i}} }{\hat \pi_{\boldsymbol{i}}^{S}} \biggr) 
    = \boldsymbol{u}_{\boldsymbol{i}}^{\top} \boldsymbol{\alpha} 
    + \boldsymbol{u}_{\boldsymbol{i}}^{\top} \boldsymbol{B} \boldsymbol{u}_{\boldsymbol{i}}
    + \gamma_{\boldsymbol{i}}, 
    \quad \boldsymbol{i} \in \mathcal{I}_{V},
\end{equation*}
where $\boldsymbol{B}$ is a symmetric matrix and $\gamma_{\boldsymbol{i}} = \gamma_{\boldsymbol{j}}$ for $\boldsymbol{j} \in D(\boldsymbol{i})$. 
Subsequently, for $\boldsymbol{i} \in \mathcal{I}_{V}$ and $\boldsymbol{j} \in D(\boldsymbol{i})$,
\begin{equation} \label{eq-th3.1}
    F \biggl( \frac{\hat \pi_{\boldsymbol{i}}}{\hat \pi_{\boldsymbol{i}}^{S}} \biggr) - F \biggl( \frac{\hat \pi_{\boldsymbol{j}}}{\hat \pi_{\boldsymbol{j}}^{S}} \biggr) 
    = (\boldsymbol{u}_{\boldsymbol{i}} - \boldsymbol{u}_{\boldsymbol{j}})^{\top} \boldsymbol{\alpha}
    + (\boldsymbol{u}_{\boldsymbol{i}} - \boldsymbol{u}_{\boldsymbol{j}})^{\top} \boldsymbol{B} (\boldsymbol{u}_{\boldsymbol{i}} + \boldsymbol{u}_{\boldsymbol{j}}).
\end{equation}
The sum of the right-hand side of equation \eqref{eq-th3.1} multiplied by $\hat \pi_{\boldsymbol{i}}$ is zero for each term, for the following reasons:
For term $\boldsymbol{\alpha} = (\alpha_{1}, \dots, \alpha_{T})^{\top}$,
\begin{equation*}
     \sum_{\boldsymbol{i} \in \mathcal{I}_{V}} \hat \pi_{\boldsymbol{i}} (\boldsymbol{u}_{\boldsymbol{i}} - \boldsymbol{u}_{\boldsymbol{j}})^{\top} \boldsymbol{\alpha}
     = \sum_{s \in V} \alpha_{s} \sum_{\boldsymbol{i} \in \mathcal{I}_{V}} (u_{i_{s}} - u_{j_{s}}) \hat \pi_{\boldsymbol{i}} = 0,
\end{equation*}
because $\hat \pi_{\boldsymbol{i}}$ satisfies the ME structure.
Because the VE and CE models hold, this immediately implies that the covariances are equal.
Therefore, for $\boldsymbol{B} = \{ \beta_{st} \}$, the following equation holds.
\begin{equation*}
    \sum_{\boldsymbol{i} \in \mathcal{I}_{V}} \hat \pi_{\boldsymbol{i}} (\boldsymbol{u}_{\boldsymbol{i}} - \boldsymbol{u}_{\boldsymbol{j}})^{\top} \boldsymbol{B} (\boldsymbol{u}_{\boldsymbol{i}} + \boldsymbol{u}_{\boldsymbol{j}})
    = \sum_{s,t \in V} \beta_{st} \sum_{\boldsymbol{i} \in \mathcal{I}_{V}} (u_{i_{s}} u_{i_{t}} - u_{j_{s}} u_{j_{t}}) \hat \pi_{\boldsymbol{i}} = 0,
\end{equation*}
because $\hat \pi_{\boldsymbol{i}}$ satisfies the VE and CE model structures.
Therefore,
\begin{equation} \label{eq-th3.2}
    \sum_{\boldsymbol{i} \in \mathcal{I}_{V}} \hat \pi_{\boldsymbol{i}} \left( F \biggl( \frac{\hat \pi_{\boldsymbol{i}} }{\hat \pi_{\boldsymbol{i}}^{S}} \biggr) - F \biggl( \frac{\hat \pi_{\boldsymbol{j}} }{\hat \pi_{\boldsymbol{j}}^{S}} \biggr) \right) = 0.
\end{equation}
Notably, the ME$_2$ model is equivalent to the condition under which the ME, VE, and CE models hold simultaneously, capturing the equality of marginal means, variances, and covariances.

Here, we define the following set:
\begin{align*}
    D_{n}(\boldsymbol{i}) &= D_{n}(i_1,\dots,i_T) \\
    &= \{ \boldsymbol{j} \in \mathcal{I}_{V} \ | \ \boldsymbol{j} \ \text{is any permutation of the} \ n \ \text{elements of} \ \boldsymbol{i} \},
\end{align*}
for a given $n \in V$, 
and function
\begin{equation} \label{eq-G_pc}
    G(\boldsymbol{i}, \boldsymbol{j}) 
    = F \biggl( \frac{\hat \pi_{\boldsymbol{i}} }{\hat \pi_{\boldsymbol{i}}^{S}} \biggr) 
    - F \biggl( \frac{\hat \pi_{\boldsymbol{j}} }{\hat \pi_{\boldsymbol{j}}^{S}} \biggr).
\end{equation}
Function $F$ increases monotonically, $G(\boldsymbol{i}, \boldsymbol{j})=0$ is satisfied only when $\hat \pi_{\boldsymbol{i}}=\hat \pi_{\boldsymbol{j}}$.
Note that if $\boldsymbol{j} \in D(\boldsymbol{i})$, then $\pi_{\boldsymbol{i}}^S = \pi_{\boldsymbol{j}}^S$.
We then prove that the following Proposition holds:
\begin{equation} \label{prop-th3.1}
    \forall n \in V, \ \forall \boldsymbol{j} \in D_{n}(\boldsymbol{i}), \ \sum_{\boldsymbol{i} \in \mathcal{I}_{V}} \hat \pi_{\boldsymbol{i}} G(\boldsymbol{i}, \boldsymbol{j}) = 0
    \ \Rightarrow \ 
    \hat \pi_{\boldsymbol{i}} = \hat \pi_{\boldsymbol{j}}.
\end{equation}
First, we consider $\boldsymbol{j} \in D_{2}(\boldsymbol{i})$ without loss of generality, permuting $i_{1}$ and $i_{2}$ in $\boldsymbol{i}$. 
The following equivalence holds.
\begin{equation}
    \sum_{\boldsymbol{i} \in \mathcal{I}_{V}} \hat \pi_{\boldsymbol{i}} G(\boldsymbol{i}, \boldsymbol{j}) = 0
    \ \Leftrightarrow \
    \sum_{\substack{\boldsymbol{i} \in \mathcal{I}_{V} \\ i_{1} < i_{2}}} (\hat \pi_{\boldsymbol{i}} - \hat \pi_{\boldsymbol{j}} ) G(\boldsymbol{i}, \boldsymbol{j}) = 0, 
\end{equation}
we can easily obtain $\hat \pi_{\boldsymbol{i}} = \hat \pi_{\boldsymbol{j}}$. Thus, the Proposition \eqref{prop-th3.1} holds when $n=2$. 
Assuming that proposition \eqref{prop-th3.1} holds for the case $n=k-1$, we consider the case $n=k$. 
Without a loss of generality, let $\boldsymbol{j} \in D_{k}(\boldsymbol{i})$ be a permutation of $(i_{1},\dots,i_{k})$ in $\boldsymbol{i}$. 
If $j_{k}=i_{k}$, then $\boldsymbol{j}$ belongs to $D_{k-1}(\boldsymbol{i})$; hence, by the induction hypothesis, proposition \eqref{prop-th3.1} holds. 
If $j_{k} \neq i_{k}$, then interchanging $i_k$ and $j_k$ in $\boldsymbol{j}$ yields $\boldsymbol{k} \in D_{k-1}(\boldsymbol{i})$. 
Based on the induction hypothesis, we propose the following hypothesis:
\begin{align}
    \sum_{\boldsymbol{i} \in \mathcal{I}_{V}} \hat \pi_{\boldsymbol{i}} G(\boldsymbol{i}, \boldsymbol{j}) = 0
    & \Leftrightarrow \sum_{\boldsymbol{i} \in \mathcal{I}_{V}} \hat \pi_{\boldsymbol{i}} G(\boldsymbol{i}, \boldsymbol{k}) + \sum_{\boldsymbol{i} \in \mathcal{I}_{V}} \hat \pi_{\boldsymbol{i}} G(\boldsymbol{k}, \boldsymbol{j}) = 0 \\
    & \Rightarrow \hat \pi_{\boldsymbol{i}} = \hat \pi_{\boldsymbol{k}} \land \sum_{\boldsymbol{k} \in \mathcal{I}_{V}} \hat \pi_{\boldsymbol{k}} G(\boldsymbol{k}, \boldsymbol{j}) = 0 \\
    & \Rightarrow \hat \pi_{\boldsymbol{i}} = \hat \pi_{\boldsymbol{k}} = \hat \pi_{\boldsymbol{j}}.
\end{align}
Therefore, Proposition \eqref{prop-th3.1} holds; namely, $\hat \pi_{\boldsymbol{i}} = \hat \pi_{\boldsymbol{j}}$ for any permutation of $\boldsymbol{i}$, and $\{ \hat \pi_{\boldsymbol{i}} \}$ satisfy the structure of complete symmetry. \\
\qed

%=Proof2========================================================
\subsection*{Proof of Equation \eqref{defn-gs-f-linear}.}
We consider the following optimization problem posed on the same feasible set as in the proof of Theorem \ref{thm-1}:
\begin{align}
    \text{Minimize}     
    &\qquad  D_{f}(\boldsymbol{\pi} \parallel \boldsymbol{\pi}^{S}) \\
    \text{subject to}   
    &\quad \sum_{\boldsymbol{j} \in D(\boldsymbol{i})} \pi_{\boldsymbol{j}} = v_{\boldsymbol{i}}^{S}, \quad \boldsymbol{i} \in \mathcal{I}_{V}, \\
    &\quad \left(\boldsymbol{\Delta}_{1}^{(h)} \right)^{\top} \boldsymbol{\pi} = 0,
    \quad \left(\boldsymbol{\Delta}_{2}^{(h)}\right)^{\top} \boldsymbol{\pi} = 0, 
    \quad h = 1,\dots, T-1, \\
    &\quad \left( \boldsymbol{\Delta}_{\mathcal{V}_{2}}^{(k)} \right)^{\top} \boldsymbol{\pi} = 0,  
    \quad k = 1, \dots, |\mathcal{V}_{2}|.
\end{align}
This optimization problem can be solved using Lagrange multipliers as follows:
\begin{align*}
    L(\{\pi_{\boldsymbol{i}} \}) = D_{f}( \boldsymbol{\pi} \parallel \boldsymbol{\pi}^{S}) 
    &+ \sum_{h=1}^{T-1} \left\{ \lambda_{1}^{(h)} \left(\boldsymbol{\Delta}_{1}^{(h)} \right)^{\top} 
    + \lambda_{2}^{(h)} \left(\boldsymbol{\Delta}_{2}^{(h)} \right)^{\top} \right\} \boldsymbol{\pi}
    \\
    &+ \sum_{k=1}^{|\mathcal{V}_{2}|} \lambda_{\mathcal{V}_{2}}^{(k)} \left( \boldsymbol{\Delta}_{\mathcal{V}_{2}}^{(k)} \right)^{\top} \boldsymbol{\pi} 
    + \sum_{\boldsymbol{i} \in \mathcal{I}_{V}} \eta_{\boldsymbol{i}} \Biggl( \sum_{\boldsymbol{j} \in D(\boldsymbol{i})} \pi_{\boldsymbol{j}} 
    - v_{\boldsymbol{i}}^{S} \Biggr).
\end{align*}
Equating the partial derivative of $L(\{\pi_{\boldsymbol{i}} \})$ to $0$ with respect to $\pi_{\boldsymbol{i}}$ yields:
\begin{align}
    0 = f^{\prime} \left( \frac{\pi_{\boldsymbol{i}}}{\pi_{\boldsymbol{i}}^{S}} \right) 
    &+ \sum_{h=1}^{T-1} \left\{ \lambda_{1}^{(h)} (u_{i_{h}} - u_{i_{h+1}})
    + \lambda_{2}^{(h)} (u_{i_{h}}^{2} - u_{i_{h+1}}^{2}) \right\} \\
    &+ \sum_{k=1}^{|\mathcal{V}_{2}|} \lambda_{\mathcal{V}_{2}}^{(k)} ( u_{i_{s}} u_{i_{t}} - u_{i_{s^{+}}} u_{i_{t^{+}}})
    + \sum_{\boldsymbol{j} \in D(\boldsymbol{i})} \eta_{\boldsymbol{j}},
\end{align}
where $(s^{+},t^{+})$ denotes the pair following $(s,t) \in \mathcal{V}_2$.
Herein, let $-\lambda_{1}^{(h)} = \alpha_{h}^{\prime}$, $-\lambda_{2}^{(h)} = \beta_{hh}^{\prime}$, $-\lambda_{\mathcal{V}_{2}}^{(k)} = \beta_{\boldsymbol{\nu}_{k}}^{\prime}$ and $-\sum_{\boldsymbol{j} \in D(\boldsymbol{i})} \eta_{\boldsymbol{j}} = \gamma_{\boldsymbol{i}}^{\prime}$.
Subsequently, we obtain the following form:
\begin{equation}
    F\biggl( \frac{\boldsymbol{\pi}}{\boldsymbol{\pi}^{S}} \biggr) =
     \sum_{h=1}^{T-1} \left( \boldsymbol{\Delta}_{1}^{(h)} \alpha_{h}^{\prime}
    + \boldsymbol{\Delta}_{2}^{(h)} \beta_{hh}^{\prime}
    \right)
    + \sum_{k=1}^{|\mathcal{V}_{2}|} \boldsymbol{\Delta}_{\mathcal{V}_{2}}^{(k)} \beta_{\boldsymbol{\nu}_{k}}^{\prime}
    + \boldsymbol{X}^S \boldsymbol{\gamma}^{\prime}
    = \boldsymbol{X} \boldsymbol{\vartheta},
\end{equation}
where 
\begin{equation}
  \boldsymbol{X} =
  \Bigl(
    \boldsymbol{\Delta}_{1}^{(1)}, \dots, \boldsymbol{\Delta}_{1}^{(T-1)},
    \boldsymbol{\Delta}_{2}^{(1)}, \dots, \boldsymbol{\Delta}_{2}^{(T-1)},
    \boldsymbol{\Delta}_{\mathcal{V}_2}^{(1)}, \dots, \boldsymbol{\Delta}_{\mathcal{V}_2}^{(|\mathcal{V}_{2}|)},
    \boldsymbol{X}^{S}
  \Bigr).
\end{equation}
\qed

%=Proof3=========================================================
\subsection*{Proof of Equation \eqref{eq-nondiagonal}.}
Matrix $\boldsymbol{F}$ and its elements $F_{ij}$ are
\begin{equation}
    \boldsymbol{F} = \frac{\partial}{\partial \boldsymbol{\pi}^\top} F \left( \frac{\boldsymbol{\pi}}{\boldsymbol{\pi}^S} \right),
\end{equation}
and
\begin{equation}
F_{ij} = 
\left\{
\begin{array}{cl}
    \dfrac{1}{\pi_{\boldsymbol{\iota}_{i}}^S} \left(1 - \dfrac{\pi_{\boldsymbol{\iota}_{i}}}{|D(\boldsymbol{\iota}_{i})| \pi_{\boldsymbol{\iota}_{i}}^S} \right) f'' \biggl( \frac{\pi_{\boldsymbol{\iota}_{i}}}{\pi_{\boldsymbol{\iota}_{i}}^S} \biggr) & j=i, \\
    - \dfrac{\pi_{\boldsymbol{\iota}_{i}}}{|D(\boldsymbol{\iota}_{i})| (\pi_{\boldsymbol{\iota}_{i}}^S)^2} f'' \biggl( \dfrac{\pi_{\boldsymbol{\iota}_{i}}}{\pi_{\boldsymbol{\iota}_{i}}^S} \biggr) & j \neq i \land \boldsymbol{\iota}_{j} \in D(\boldsymbol{\iota}_{i}), \\
    0       & \textit{otherwise}.
\end{array}
\right.
\end{equation}
Thus, we obtain
\begin{equation}
    \boldsymbol{H}_{1} (\boldsymbol{\pi}) \boldsymbol{\pi} = \boldsymbol{0}_{d_1},
\end{equation}
and under hypothesis $\boldsymbol{h}_{3} (\boldsymbol{\pi}) = \boldsymbol{0}_{d_3}$; that is, under the S model, we see
\begin{equation}
    \boldsymbol{H}_{1} (\boldsymbol{\pi}) \text{diag}(\boldsymbol{\pi})
    = c \ \boldsymbol{U}^\top \bigl(\boldsymbol{I} - \boldsymbol{J} \bigr),
\end{equation}
where $c = f''(1)$ and the elements of $\boldsymbol{J} = \{J_{ij}\}$ are
\begin{equation}
J_{ij} = 
\left\{
\begin{array}{cl}
    \dfrac{1}{|D(\boldsymbol{\iota}_{j})|} & \boldsymbol{\iota}_{j} \in D(\boldsymbol{\iota}_{i}), \\
    0       & \textit{otherwise}.
\end{array}
\right.
\end{equation}
It is straightforward to verify that $\boldsymbol{J}$ belongs to the space $S(\boldsymbol{X})$. 
Therefore, noting that $\boldsymbol{U}^\top \boldsymbol{J} = \boldsymbol{O}_{d_{1}, r^T}$ and $\boldsymbol{M} \boldsymbol{U} = \boldsymbol{O}_{d_{2}, d_{1}}$, we obtained that under the S model 
\begin{equation}
    \boldsymbol{H}_{1} (\boldsymbol{\pi}) \boldsymbol{\Sigma}(\boldsymbol{\pi}) \boldsymbol{H}_{2}^\top (\boldsymbol{\pi}) = \boldsymbol{O}_{d_{1}, d_{2}}.
\end{equation}
\qed

\bibliographystyle{apalike}
\bibliography{main.bib}

\end{document}